\def\ispreprint{}
\documentclass{article}

\usepackage{microtype}
\usepackage{graphicx}
\usepackage{subcaption}
\usepackage{booktabs} 
\usepackage{mdframed}
\usepackage{algorithm}
\usepackage{algorithmic}
\usepackage{enumitem}
\usepackage{booktabs}
\usepackage{multirow}
\usepackage{xspace}
\usepackage{pifont}
\usepackage{hyperref}

\setlist[itemize]{leftmargin=*, itemsep=0pt, label=\large\textbullet}
\setlist[enumerate]{leftmargin=*, itemsep=0pt}

\newcommand\name{\textsc{Immaculate}\xspace}
\newcommand\parens[1]{\left(#1\right)}

\newcommand\braces[1]{\left\{#1\right\}}
\newcommand\angles[1]{\left\langle #1\right\rangle}

\usepackage{xcolor}

\renewcommand{\algorithmicwhile}{\textbf{upon}}

\usepackage{icml2026}
\AtBeginDocument{
  \setlength{\parskip}{2.5pt}
}

\usepackage{amsmath}
\usepackage{amssymb}
\usepackage{mathtools}
\usepackage{amsthm}
\usepackage{adjustbox}
\usepackage{enumitem}

\usepackage[capitalize,noabbrev]{cleveref}

\theoremstyle{plain}
\newtheorem{theorem}{Theorem}[section]
\newtheorem{proposition}[theorem]{Proposition}

\theoremstyle{definition}
\newtheorem{definition}[theorem]{Definition}

\theoremstyle{remark}

\newcommand{\xmark}{\ding{55}}%
\newcommand{\cmark}{\ding{51}}%
\newcommand{\mycheck}{{\color{green}\cmark}}
\newcommand{\mycross}{{\color{red}\xmark}}

\setlist[itemize]{
  leftmargin=*,
  topsep=0pt,
  itemsep=0pt,
  parsep=0pt,
  partopsep=0pt
}

\setlist[enumerate]{
  leftmargin=*,
  topsep=0pt,
  itemsep=0pt,
  parsep=0pt,
  partopsep=0pt
}

\usepackage[textsize=tiny]{todonotes}

\begin{document}

\twocolumn[
  \icmltitle{IMMACULATE: A Practical LLM Auditing Framework via Verifiable Computation}



  \icmlsetsymbol{equal}{*}

  \begin{icmlauthorlist}
    \icmlauthor{Yanpei Guo}{nus}
    \icmlauthor{Wenjie Qu}{nus}
    \icmlauthor{Linyu Wu}{nus}
    \icmlauthor{Shengfang Zhai}{nus}
    \icmlauthor{Lionel Z. WANG}{ntu}
    \icmlauthor{Ming Xu}{nus}
    \icmlauthor{Yue Liu}{nus}
    \icmlauthor{Binhang Yuan}{ind}
    \icmlauthor{Dawn Song}{ucb}
    \icmlauthor{Jiaheng Zhang}{nus}
  \end{icmlauthorlist}

  \icmlaffiliation{nus}{National University of Singapore}
  \icmlaffiliation{ntu}{Nanyang Technological University}
  \icmlaffiliation{ind}{Independent Researcher}
  \icmlaffiliation{ucb}{University of California, Berkeley}

  \icmlcorrespondingauthor{Jiaheng Zhang}{jhzhang@nus.edu.sg}

  \icmlkeywords{Machine Learning, ICML}

  \vskip 0.3in
]



\printAffiliationsAndNotice{}  

\begin{abstract}
Commercial large language models are typically deployed as black-box API services, requiring users to trust providers to execute inference correctly and report token usage honestly.
We present IMMACULATE, a practical auditing framework that detects economically motivated deviations-such as model substitution, quantization abuse, and token overbilling-without trusted hardware or access to model internals.
IMMACULATE selectively audits a small fraction of requests using verifiable computation, achieving strong detection guarantees while amortizing cryptographic overhead.
Experiments on dense and MoE models show that IMMACULATE reliably distinguishes benign and malicious executions with under 1\% throughput overhead. 
Our code is published at \url{https://github.com/guo-yanpei/Immaculate}.
\end{abstract}

\section{Introduction}

Large language models (LLMs) have become a critical computational substrate across a wide range of applications, enabling high-quality text generation~\cite{brown2020language, chowdhery2023palm}, conversational agents~\cite{adamopoulou2020overview}, code synthesis~\cite{chen2021evaluating}, and reasoning-intensive workloads~\cite{wei2022chain}. 
Today, most frontier LLMs including OpenAI~\cite{openai_api}, Google~\cite{google_vertex_ai}, and Anthropic~\cite{anthropic_api} are deployed through \emph{API-based inference services}, where users interact with proprietary models via black-box endpoints and are charged based on reported token usage and model specifications.


Despite its success, the black-box API model introduces a fundamental \emph{trust asymmetry}. 
Users lack any visibility into the deployed model or its internal execution, making it hard for users to audit whether the providers faithfully executed the claimed inference procedure.  
From an economic perspective, providers may have incentives to silently deviate from the advertised protocol—for example, by substituting a distilled model or applying aggressive quantization to reduce computational cost. 
Indeed, anecdotal reports of quality degradation under unchanged model claims have appeared in public forums, highlighting an emerging accountability gap in LLM service provision~\cite{gpt_lazy1, gpt_lazy2}.

Compared to model substitution, \emph{token overbilling} poses a more severe integrity threat~\cite{sun2025coin}.
For reasoning-intensive workloads, API providers (e.g., OpenAI) hide intermediate chain-of-thought tokens while still billing for full internal token usage~\cite{openai_api}.
As token usage is a dominant cost factor for downstream applications and autonomous agents~\cite{chen2023frugalgpt,mohammadi2025evaluation}, the absence of verifiable billing guarantees undermines consumer trust.
Unlike model substitution, token misreporting leaves no observable signal in the output, making detection impossible under standard black-box access.

These challenges highlight the growing importance of \emph{auditing black-box LLM services}.
As LLM APIs become critical infrastructure for downstream applications and autonomous agents, users must be able to verify that providers execute inference honestly—without gaining access to proprietary models or internal states.
However, auditing black-box LLM services is inherently difficult due to competing system requirements: 
\begin{itemize}
\item \textbf{Economy}: the auditing mechanism should introduce minimal computational and monetary overhead for benign LLM service providers.
\item \textbf{Proprietorship}: the LLM vendor’s model architecture, parameters, and internal inference states should remain confidential.
\item \textbf{Robustness}: the framework should offer provable guarantees for detecting protocol deviations, without falsely flagging honest servers.
\end{itemize}

A representative approach~\cite{cai2025you} enforces runtime integrity using trusted execution environments (TEEs), particularly GPU-based TEEs~\cite{nvidia2023hoppercc}.
While TEEs provide strong integrity guarantees with modest overhead ($\approx 20\%$ on NVIDIA H100~\cite{tan2025pipellm}), their economic viability is limited: GPU-TEE support is restricted to specific NVIDIA hardware, and large-scale adoption of GPU-TEEs would require costly redesign of existing inference infrastructure (e.g., Google TPUs~\cite{jouppi2017datacenter}, AWS Inferentia~\cite{aws_inferentia}, Ascend GPUs~\cite{zhou2025accelerating}).

An alternative line of work relies on cryptographic proofs to attest correct inference execution~\cite{chen2024zkml}.
However, such approaches incur substantial prover overhead and are fundamentally limited to deterministic computations.
Existing systems (e.g., zkLLM~\cite{sun2024zkllm}, zkGPT~\cite{qu2025zkgpt}) therefore restrict inference to integer-only arithmetic, which is inefficient on modern GPUs, rendering these approaches impractical for real-world LLM services~\cite{cai2025you,gao2024model}.

\paragraph{Our approach.}
In this work, we present \name, a practical auditing framework for black-box LLM services that following the line of proof-based approaches.  
Compared to prior proof-based schemes, \name eliminates the need for integer-only inference and remains compatible with existing inference infrastructures, while incurring only minor proving overhead.

The key insight behind cost-saving is economic: a rational service provider can only benefit from misbehavior by deviating from the claimed inference or billing protocol on \emph{a non-negligible fraction of} requests.  
Therefore, instead of verifying every request, the server can be considered benign if it can successfully prove its execution on a small, random subset of requests.  
As a result, an auditor can detect large-scale deviations with overwhelming probability by issuing a small number of random queries that mimic normal user requests and requiring the server to prove their execution.
Consequently, the relative overhead of proof generation is negligible compared to the cost of serving billions of inferences per day. 
This idea is visualized in Figure \ref{fig:framework}.

To address the inherent non-determinism of floating-point execution on modern accelerators, we introduce the logit distance distribution (LDD) as a quantitative indicator of approximation fidelity between a returned response and a claimed model.
Rather than proving exact LLM execution—which is infeasible under numerical non-determinism—we require the server to prove the computation of the LDD.
The resulting distribution serves as a \emph{verifiable audit footprint} that captures systematic deviations from the claimed inference procedure while remaining stable under benign numerical noise.
When aggregated over a large number of requests, LDDs induced by benign execution and malicious deviation become discriminative.
This design enables sound verification of LLM execution without requiring bitwise reproducibility.

\begin{figure}
    \centering
    \includegraphics[width=0.6\linewidth]{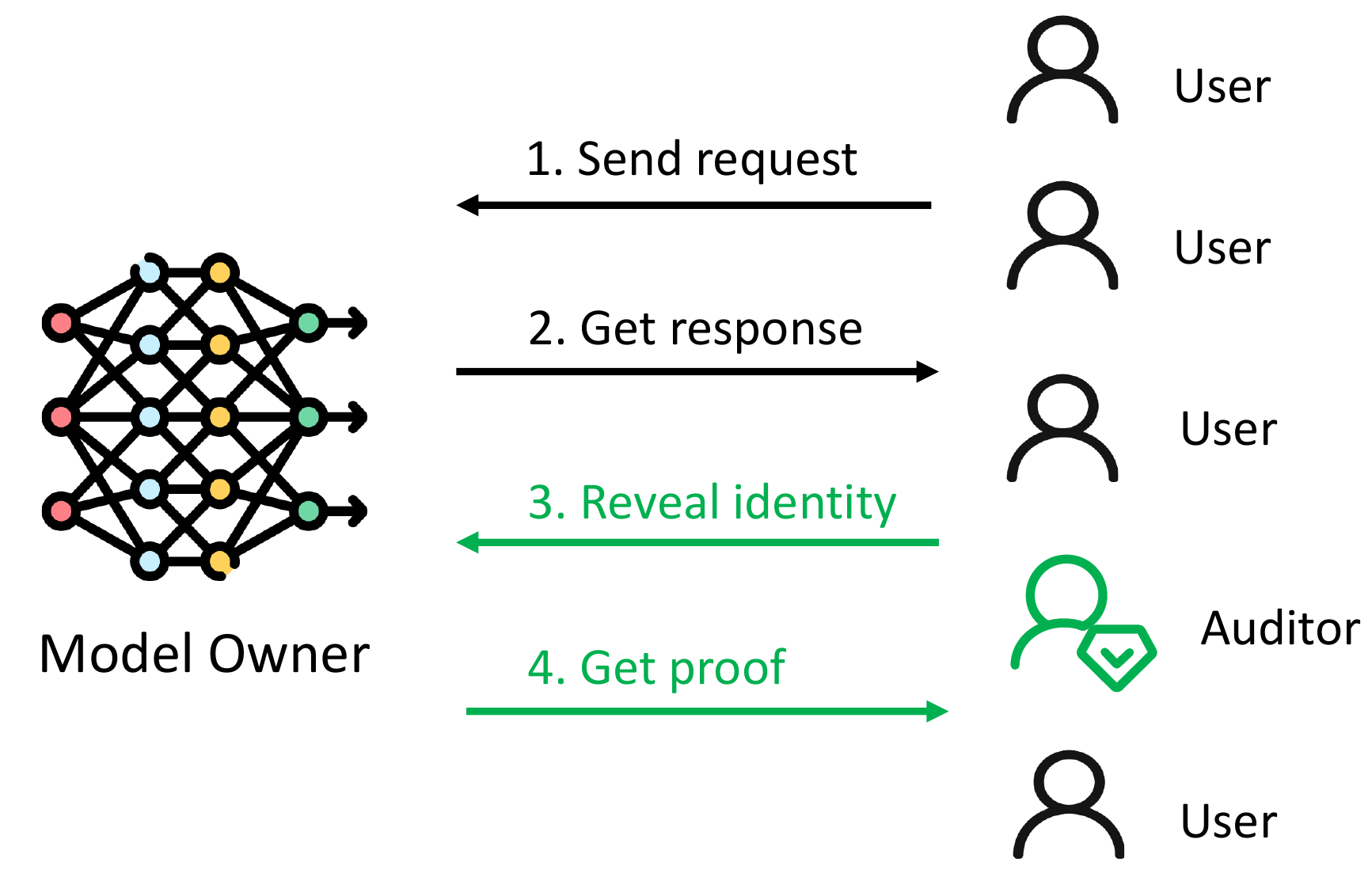}
    \caption{The auditor sends random requests like other users and requires the model owner to prove the responses after they are received.}
    \label{fig:framework}
\end{figure}

By combining probabilistic auditing and proving LDD on audited requests, \name provides economical, strong, provable guarantees for LLM service integrity.
The contributions can be summarized as follows:
\begin{itemize}
    \item We propose \name, a practical auditing framework for black-box LLM services that provides strong integrity guarantees without relying on GPU-based trusted execution environments.
    
    \item We introduce the \emph{logit distance distribution} (LDD), a fidelity metric that quantifies the approximation gap between a deployed execution and a claimed full-precision model, enabling verification under numerical non-determinism.
    
    \item We demonstrate that \name is economical and compatible with existing large-scale LLM deployments. Our prototype implementation built on \texttt{vLLM} incurs only $\sim$1\% overhead.
    Our code is published at \url{https://github.com/guo-yanpei/Immaculate}.
\end{itemize}

\section{Preliminary and Related Work}

\paragraph{Notation.}
For a function $f(\cdot)$, the assignment $y := f(x)$ denotes a \emph{deterministic}
map whose output is uniquely determined by its input.
The assignment $y \leftarrow g(x)$ denotes a \emph{non-deterministic} map, typically
reflecting hardware-dependent variability such as GPU parallelism.
We use $[n]$ to denote set $\braces{1, 2, \cdots, n}$.

\subsection{Cryptographic Commitment}

Cryptographic commitment schemes~\cite{merkle2019protocols} allow a party to commit to a message and later reveal it, while guaranteeing binding and hiding. In practice, commitments can be instantiated using cryptographic hash functions, e.g., $c = \mathsf{Hash}(m || r)$.

\subsection{Verifiable Computation}

Verifiable computation (VC)~\cite{gennaro2013quadratic} enables a powerful prover to convince a lightweight verifier that a \emph{deterministic} public function $f(\mathbf{x}, \mathbf{w}) = \mathbf{y}$ was executed correctly, without revealing the private witness $\mathbf{w}$.
Compared to re-execution, VC provides sublinear (often polylogarithmic) verification cost and witness privacy.
Formally, VC consists of a prover--verifier pair $(\mathsf{Prove}, \mathsf{Vrfy})$ satisfying completeness, soundness, and zero-knowledge.
VC can be instantiated via Trusted Execution Environments (TEEs)~\cite{sabt2015trusted} or zero-knowledge proofs (ZKPs)~\cite{goldwasser2019knowledge}:
TEEs offer near-native performance under hardware trust assumptions, while ZKPs provide stronger cryptographic guarantees at substantially higher cost.

\subsection{Related Works} \label{sec:related}

\begin{table}[!ht]
    \caption{The comparison between our work and prior works.}
    \label{tab:literature}
    \begin{adjustbox}{max width=1\linewidth}

    \begin{tabular}{@{}cclclc@{}}
    \toprule
        & \textbf{Robustness} &  & \textbf{Infra-agnostic} &  & \textbf{Efficiency} \\ \midrule
    GPU TEE        & \mycheck             &  & \mycross                 &  & \mycheck           \\
    Cryptography    & \mycheck             &  & \mycheck                 &  & \mycross           \\
    Empirical         & \mycross             &  & \mycheck                 &  & \mycheck           \\
    \midrule
    Our Approach                                                               & \mycheck             &  & \mycheck                 &  & \mycheck           \\ \bottomrule
    \end{tabular}

    \end{adjustbox}
  
\end{table}

Existing LLM auditing approaches can be broadly categorized into three paradigms: empirical methods, GPU-TEE–based verification, and cryptographic verification.
Table~\ref{tab:literature} summarizes their trade-offs in terms of robustness, hardware requirements, and efficiency.

GPU-TEE–based approaches provide strong robustness guarantees but rely on specialized infrastructure, which introduces considerable hardware procurement costs.
Cryptographic approaches offer strong security guarantees without trusted hardware; however, they incur substantial proof-generation cost and are typically restricted to integer-only model executions, making them inefficient on modern AI-accelerated hardware.

Empirical auditing methods, which operate without trusted hardware or cryptographic primitives, are computationally efficient and easily deployable. 
Nevertheless, they generally lack formal guarantees of completeness and soundness, and are typically vulnerable to adaptive attacks such as aggressive quantization~\cite{cai2025you} or token overbilling~\cite{sun2025coin}.

In contrast, our approach achieves a favorable balance across all three dimensions, combining robustness with hardware compatibility and practical efficiency.
A more detailed discussion of existing auditing schemes is provided in Appendix~\ref{app:related}.

\section{Threat Model}

\paragraph{System setting.}
Our auditing framework involves two parties: a cloud-based LLM inference server $\mathsf{Srv}$ and a trusted auditor $\mathsf{Adt}$.
The server claims to perform inference using a specified model $\mathcal{M}_{\theta}$ with a certain precision.
For each query $\vec{x}$, $\mathsf{Srv}$ returns an output sequence $\vec{y}$ along with a reported token count $T$.
The goal of $\mathsf{Adt}$ is to act as a normal user by submitting random queries to $\mathsf{Srv}$ and verifying whether $\mathsf{Srv}$ processes the requests faithfully.


\paragraph{Threat model.}
We assume $\mathsf{Srv}$ is rational and economically motivated, and may deviate from the claimed protocol to reduce computation cost or increase billing. 
We exclude attacks that increase computational cost and manage to serve users with an alternative model. 
A malicious server is assumed to misbehave on at least $10\%$ of queries.
We assume $\mathsf{Adt}$ always behaves honestly, and $\mathsf{Srv}$ cannot distinguish the requests from $\mathsf{Adt}$ and from normal user.

\paragraph{Auditing goal.}
An auditing scheme must distinguish between the following two types of servers:
\begin{enumerate}
    \item \textbf{Honest execution:} $\mathsf{Srv}(\vec{x}) = \mathcal{M}_{\theta}(\vec{x})$ for all $\vec{x}$.
    \item \textbf{$\alpha$-dishonest execution:}
    \[
        \Pr_{\vec{x}}[\mathsf{Srv}(\vec{x}) \neq \mathcal{M}_{\theta}(\vec{x})] \ge \alpha.
    \]
\end{enumerate}

The auditing framework outputs \textsf{ACCEPT} or \textsf{REJECT} and satisfies:
\begin{itemize}
    \item \textbf{Completeness:} An honest server is accepted with overwhelming probability.
    \item \textbf{Soundness:} An $\alpha$-dishonest server is rejected with high probability.
    \item \textbf{Efficiency:} Auditing introduces only marginal overhead beyond standard inference.
    \item \textbf{Privacy:} The model's parameters and architecture remain hidden from all parties except the model server.
    \item \textbf{Generality:} The scheme applies across various LLM architectures and hardware platforms.
\end{itemize}

In our paper, following prior works~\cite{cai2025you, sun2025coin}, we consider the following integrity deviation attacks:
\begin{itemize}
    \item \textbf{Model substitution.}
The server replaces the claimed model with a lower-cost alternative.
\item \textbf{Aggressive quantization.}
The server executes the claimed model architecture but uses a lower-precision arithmetic format than advertised (e.g., FP8 instead of BF16). 
\item \textbf{Token overreporting.}
The server executes all recurrent hybrid steps faithfully but manipulates the output reconstruction function $D$ to inflate the reported token count $T$ beyond actual usage.
\end{itemize}

\section{Logit Distance Distribution}

A central challenge in auditing LLM inference is the absence of a well-defined ``ground truth" execution.
Due to GPU-level non-determinism and numerical instability in finite-precision arithmetic, benign executions can even produce different outputs.

Our key insight is that, for each LLM inference task, one can define an idealized and deterministic reference execution under full-precision arithmetic. 
Faithfulness can then be assessed by measuring the \emph{approximation fidelity} between the observed execution and this ideal reference. 
Intuitively, if the discrepancy between claimed model and executed model arise solely from numerical error, their distance would exhibit a stable and predictable distribution. 
In contrast, malicious behaviors—such as model substitution or aggressive quantization—introduce structured deviations that generate significantly stronger and easily distinguishable error signals.   
We hence try to use such distance distribution to quantize the approximation fidelity.

However, a central challenge lies in the fact that the discrete steps in LLMs can significantly amplify numerical errors.  
For example, since LLMs generate outputs auto-regressively, even a small numerical error that leads to a different token selection can cause the subsequent inference sequence to fully diverge from the reference.  
Therefore, a mechanism is needed to align the observed inference with the reference.

In Section \ref{sec:hybrid-compute}, we introduce an abstraction of general LLMs, which serves as the foundation for our alignment technique.  
In Section \ref{sec:ldd}, we present the \emph{logit distance distribution} (LDD) as a metric for quantifying approximation fidelity, which relies on our alignment technique.



\subsection{Hybrid Computation Model: LLM Abstraction} \label{sec:hybrid-compute}

Let $\mathcal{X} \subseteq \mathbb{R}^{+}$ denote the continuous hidden-state space, $\mathcal{D}$ a finite discrete decision set (e.g., vocabulary symbols or expert indices), $\theta$ the model parameter space, and $\Gamma$ the token vocabulary. Given a prompt $\vec{x} \in \Gamma^{+}$, a large language model (LLM) autoregressively produces an output sequence $\vec{y} \in \Gamma^{+}$.

\paragraph{Hybrid computation model.}
We abstract LLM inference as a hybrid continuous--discrete process
\[
\mathcal{M}_{\theta} := (E_{\theta}, F_{\theta}, S, G_{\theta}, D),
\]
which evolves through a sequence of transitions combining differentiable neural computation and discrete control-flow decisions. Concretely, inference proceeds as follows.

\begin{itemize}
    \item \underline{\textit{Initial embedding.}}  
    The input prompt $\vec{x}$ is mapped into an initial continuous hidden state:
    \[
        h_0 \leftarrow E_{\theta}(\vec{x}), 
        \qquad E_{\theta} : \Gamma^{+} \rightarrow \mathcal{X}.
    \]

    \item \underline{\textit{Recurrent hybrid step.}}  
    For each generation step $i = 1,2,\dots$, the model executes:
    \begin{itemize}
        \item \textbf{Continuous transformation.}  
        Differentiable neural operators (e.g., attention, MLPs, normalization) map the current state $h_{i-1}$ to an intermediate state $\tilde{h}_i$ and a decision vector $\ell_i$ (e.g., logits):
        \[
            (\tilde{h}_i, \ell_i) \leftarrow F_{\theta}(h_{i-1}), 
            \qquad F_{\theta} : \mathcal{X} \rightarrow \mathcal{X} \times \mathcal{X}.
        \]

        \item \textbf{Discrete decision.}  
        A discrete choice $d_i \in \mathcal{D}$ (such as token sampling or MoE routing) is selected according to $\ell_i$ and a randomness source $r_i$:
        \[
            d_i := S(\ell_i, r_i), 
            \qquad S : \mathcal{X} \times \mathcal{R} \rightarrow \mathcal{D}.
        \]

        \item \textbf{State update.}  
        The discrete decision is injected back into the continuous computation stream:
        \[
            h_i \leftarrow G_{\theta}(\tilde{h}_i, d_i), 
            \qquad G_{\theta} : \mathcal{X} \times \mathcal{D} \rightarrow \mathcal{X}.
        \]
    \end{itemize}

    \item \underline{\textit{Output reconstruction.}}  
    After $N$ recurrent steps, the output sequence and reported token count are reconstructed:
    \[
        (\vec{y}, T) := D(\{d_1, d_2, \dots, d_N\}).
    \]
\end{itemize}

The hybrid computation model distinguishes continuous transformations from discrete decisions in two key respects:
\begin{itemize}
\item Continuous transformations are subject to runtime non-determinism, whereas discrete decisions are fully determined by their inputs.
\item Continuous outputs vary smoothly under small input perturbations, while discrete decisions can change abruptly in response to the same perturbations, leading to different control-flow paths.
\end{itemize}


\paragraph{Ideal and approximate execution.}
We define the \emph{full-precision model} $\mathcal{M}^{\star}$ as the idealized execution of the above hybrid process where all continuous transformations are evaluated over the real numbers $\mathbb{R}$. 
In contrast, a deployed implementation $\mathcal{M}$ operates using finite-precision arithmetic (e.g., FP16, BF16, FP8), yielding a numerical approximation of $\mathcal{M}^{\star}$. 
A central goal is to quantify the fidelity of $\mathcal{M}$ relative to $\mathcal{M}^{\star}$.

\paragraph{Control-flow alignment.}

A crucial challenge in defining and measuring \emph{approximation fidelity} is that $\mathcal{M}$ and $\mathcal{M}^{\star}$ may follow completely different control flows due to tiny numerical differences.
This situation can arise when the two models make different discrete decisions.
One way to address this issue is to enforce $\mathcal{M}^{\star}$ to choose the same discrete decisions as $\mathcal{M}$ at every step.
This approach is visualized in Figure \ref{fig:continuity}.

\begin{figure}
    \centering
    \includegraphics[width=0.9\linewidth]{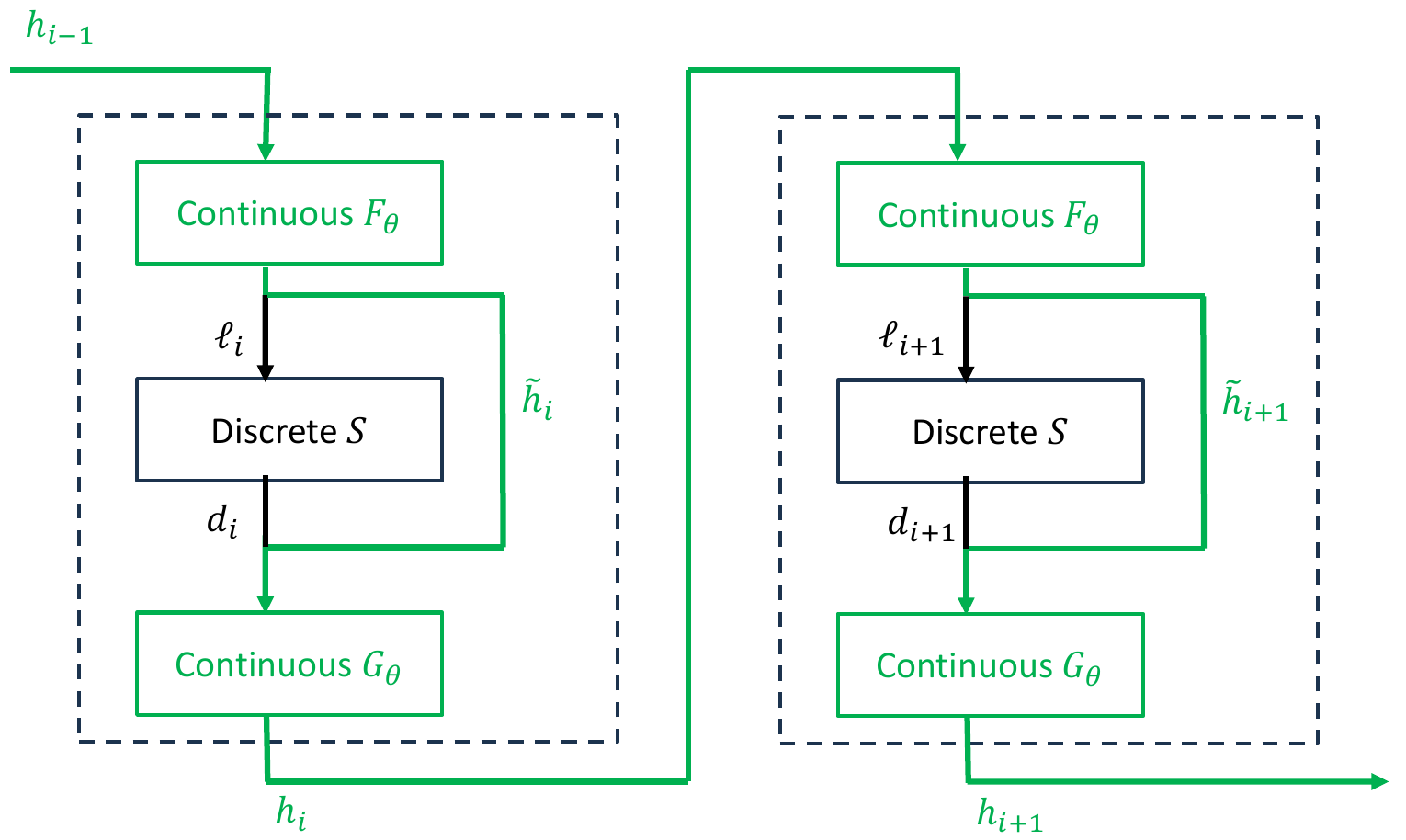}
    \caption{When fixing discrete selections ${d_i}$, the entire inference (green workflow) can be viewed as a continuous computation, as discrete selection no longer introduces branching uncertainty.}
    \label{fig:continuity}
\end{figure}


\subsection{Logit Distance Distribution} \label{sec:ldd}

However, this enforcement introduces a new issue.
For each enforced decision $d_i$, we must evaluate the ``proximity" between the logits $\ell_i^{\star}$ and the decision $d_i$.
If a decision is far inconsistent with the underlying state (i.e., the distance is too large), such enforcement should be flagged as misbehavior.

Measuring the fidelity between logits $\ell_i^{\star}$ and the discrete choice $d_i$ is challenging, as it involves comparing two inherently different entities: continuous values and discrete outcomes.  
Instead, we compare the distance between the \emph{inputs} to the discrete selection function, namely the logits $\ell_i$ and $\ell_i^{\star}$.  
Since the discrete selection function is deterministic, recording the logits is also sufficient to reproduce the decision $d_i$.

Therefore, validating the proximity of a discrete selection can be transformed into checking distance between $\ell_i$ and $\ell^{\star}_i$.
Any misbehavior tends to make $\ell_i$ and $\ell^{\star}_i$ far.
Hence, the approximation fidelity can be measured by logit distance distribution, formally defined below.

\begin{definition}[Logit Distance Distribution] \label{def:ldd}
Conditioned on an identical discrete decisions, let $\{\ell_i\}$ denote the logits produced by a deployed model execution, and let $\{\ell_i^{\star}\}$ denote the corresponding logits produced by the full-precision model.  
The \emph{logit distance distribution} (LDD) is defined as the distribution of $\{\mathsf{Dis}(\ell_i, \ell_i^{\star})\}$, where $\mathsf{Dis}(\cdot,\cdot)$ denotes a distance metric between two logit vectors, like KL divergence or total variance (TV) distance. 
\end{definition}





The following propositions characterizes the statistical signatures of three existing attacks studied in previous works~\cite{cai2025you, sun2025coin}:

\begin{proposition}
\label{prop:logit_bias_variance}
Model substitution introduces a \emph{systematic bias} in logit outputs. As a result, model substitution typically yields substantially larger LDDs. 
\end{proposition}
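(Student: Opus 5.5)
The plan is to make the informal statement precise through a bias--variance decomposition of the per-step logit error, and then compare the honest and substituted cases. Fix the decision sequence $\{d_i\}$ produced by the deployed execution and force $\mathcal{M}^{\star}$ to follow it, as in the control-flow alignment. With the control flow fixed, the whole inference is a continuous map of $(\vec{x},\{d_i\})$.

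For an honest server running $\mathcal{M}_{\theta}$ in finite precision, write $\ell_i = \ell_i^{\star} + \varepsilon_i$. Here $\varepsilon_i$ is accumulated rounding error. Treat it as zero-mean, or with mean of order the unit roundoff $u$, and with variance $\sigma^2 = O(u^2 L^2)$, where $L$ is a Lipschitz constant of the fixed-path computation. The bound on $\varepsilon_i$ follows from standard forward error analysis of $F_{\theta}$ and $G_{\theta}$ along the fixed path.

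For a substituted model $\mathcal{M}_{\theta'}$ computed on the same path, decompose
\[
\ell_i' - \ell_i^{\star} = \Delta_i + \varepsilon_i', \qquad \Delta_i := \ell^{\star}_{\theta'}(h_{i-1}) - \ell^{\star}_{\theta}(h_{i-1}).
\]
The term $\Delta_i$ is deterministic given the path. It is the systematic bias, and it does not shrink with precision; it depends only on $\theta' \neq \theta$.

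Next, relate $\mathsf{Dis}$ to this decomposition. For TV distance or KL divergence on softmax outputs, use the local expansion $\mathrm{KL}(\mathrm{softmax}(\ell^{\star}+\delta)\,\|\,\mathrm{softmax}(\ell^{\star})) \approx \tfrac12 \delta^{\top} J(\ell^{\star})\,\delta$, where $J$ is the Fisher or softmax Jacobian. Similarly, TV is approximately $\tfrac12\|J\delta\|_1$. Taking expectations over steps and prompts gives
\[
\mathbb{E}[\mathsf{Dis}] \approx \tfrac12\,\mathbb{E}[\Delta^{\top}J\Delta] + \tfrac12\,\mathrm{tr}(J\Sigma_{\varepsilon}).
\]
The cross term vanishes because the numerical noise is uncorrelated with $\Delta$. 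In the honest case only the second term is present, and it is $O(u^2)$. In the substituted case the first term dominates whenever $\mathbb{E}[\Delta^{\top}J\Delta] \gg \mathrm{tr}(J\Sigma_\varepsilon)$. This holds whenever the two models differ in output distribution by more than the numerical noise floor, which the distilled-model assumption supplies.

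Finally, turn the shift in means into a statement about distributions. Concentration (Hoeffding or Bernstein over the many steps $i$, using bounded TV) shows that the empirical LDD of a substituted server stochastically dominates the honest LDD and is separated from it by a gap of order $\mathbb{E}[\Delta^{\top}J\Delta]$.

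The main obstacle is that "substantially larger" is not a theorem without an assumption tying $\Delta$ to model dissimilarity. A substitute model could in principle match $\mathcal{M}_{\theta}$ on the fixed-path logits. I would therefore state the result conditionally: assume $\mathbb{E}\|\Delta\|_J^2 \ge \kappa$ for a margin $\kappa$ exceeding the honest noise floor. This is the sense in which the proposition says "typically." I would justify the assumption empirically, not derive it. A second difficulty is controlling $\Sigma_\varepsilon$ along long autoregressive paths, where errors can grow with $L^N$. I would handle this by measuring the logit error at each step from the deployed state, rather than tracking an accumulated bound.
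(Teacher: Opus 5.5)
Your decomposition is exactly the paper's proof. On the aligned path the paper writes $\ell_i^{(A)}-\ell_i^{(B)}=\Delta_i^{\mathrm{model}}+\Delta_i^{\mathrm{prec}}$, notes that $\Delta_i^{\mathrm{model}}\neq 0$ when $\theta^{(A)}\neq\theta^{(B)}$, labels this the systematic bias, and stops. The ``substantially larger LDD'' half is therefore never derived there.

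What you add beyond that is useful. It ties the bias to the functional $\mathsf{Dis}$ and states the margin hypothesis that the paper leaves implicit. Your conditional formulation is also more accurate than the paper's, since $\theta'\neq\theta$ neither forces $\Delta_i\neq 0$ on a given path nor makes it large. One notational fix: define $\Delta_i$ as a function of the path $(\vec{x},d_{<i})$, with each model running its own full-precision recursion. Evaluating both logit maps at a common $h_{i-1}$ is meaningless when, e.g., LLaMA3-8B and LLaMA3-70B have different hidden sizes.

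The tools you use in that extra half, however, do not fit the regime. The expansion $\mathrm{KL}\approx\frac12\delta^{\top}J\delta$ is local, whereas substitution is precisely the large-deviation case: Table~\ref{tab:global_threshold} reports TV above $0.3$ on $6$--$15\%$ of steps. Moreover, the additive split with a vanishing cross term is a feature of the quadratic KL approximation. It does not hold for TV, which is an $\ell_1$ quantity.

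A route that needs no smallness is the triangle inequality. Let $s(\cdot)$ be the softmax, $\ell'_i$ the deployed substitute logits, and $\ell^{\star}_{\theta,i},\ell^{\star}_{\theta',i}$ the full-precision logits of the claimed and substitute models on the same path. Then
\[
\mathrm{TV}\bigl(s(\ell_i'),\,s(\ell^{\star}_{\theta,i})\bigr)\;\ge\;\mathrm{TV}\bigl(s(\ell^{\star}_{\theta',i}),\,s(\ell^{\star}_{\theta,i})\bigr)-\mathrm{TV}\bigl(s(\ell_i'),\,s(\ell^{\star}_{\theta',i})\bigr).
\]
The subtracted term is a pure finite-precision error of the kind that makes up the honest LDD. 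KL then follows by Pinsker, $\mathrm{KL}\ge 2\,\mathrm{TV}^2$, and your margin assumption becomes a lower bound on the first term on the right.

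Three further points. Hoeffding or Bernstein over the steps of one generation is not licensed, because autoregressive steps are dependent; use a martingale (Azuma) bound, or concentrate over the i.i.d.\ audited prompts instead. A gap in means yields separation of averages or of tail fractions such as $p(t_1)$, not stochastic dominance of the whole LDD. Finally, measuring error ``from the deployed state'' would analyze a different quantity from the LDD, which Algorithm~\ref{alg:vc} computes by recursing on its own $h^{\star}_{i-1}$. For transformer LLMs the $L^N$ worry largely disappears anyway: with the tokens fixed, each step's logits are a depth-bounded function of the prefix.
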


\begin{proposition}
\label{prop:logit_bias_variance_numerical}
Reducing numerical precision manifests as increased variance in logit deviations. 
Thus, precision reduction causes a progressive broadening of the LDD distribution.
\end{proposition}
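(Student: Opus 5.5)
We would formalize precision reduction as an additive perturbation model and then track how the perturbation propagates to the logits along the aligned control flow. Fix the discrete decisions $\{d_i\}$ as in the control-flow alignment of Section~\ref{sec:hybrid-compute}, so that the whole execution becomes a composition of continuous maps $F_{\theta}$ and $G_{\theta}(\cdot, d_i)$. For a deployed precision with unit roundoff $u$ (e.g.\ $u = 2^{-8}$ for BF16, $u=2^{-4}$ or $2^{-3}$ for FP8 formats), we model each elementary floating-point operation as $\mathrm{fl}(a \circ b) = (a\circ b)(1+\delta)$ with $|\delta|\le u$. We treat the $\delta$'s as zero-mean random variables (the standard probabilistic rounding model, justified by GPU reduction-order nondeterminism) with variance $\sigma^2 \propto u^2$. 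Weight quantization is handled the same way: $\hat\theta = \theta + \Delta\theta$, with $\Delta\theta$ zero-mean and variance $\propto u^2$ under round-to-nearest.

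Next we linearize. Since the composed map is smooth once the decisions are fixed, a first-order expansion gives
\[
\ell_i - \ell_i^{\star} \;=\; \sum_{k} J_{i,k}\,\varepsilon_k + O(u^2),
\]
where $\varepsilon_k$ ranges over the local rounding and quantization errors and $J_{i,k}$ are the Jacobians of $\ell_i$ with respect to the corresponding intermediate quantities. These Jacobians are evaluated along the \emph{same} trajectory as $\mathcal{M}^{\star}$, so they do not depend on $u$. Taking expectations shows that the leading-order deviation has mean zero, so there is no systematic bias, in contrast with Proposition~\ref{prop:logit_bias_variance}. Its covariance is $\sum_k J_{i,k}\Sigma_k J_{i,k}^{\top} = C_i\,u^2$ with $C_i$ independent of $u$. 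Consequently, reducing the precision from $u_1$ to $u_2>u_1$ scales the covariance of $\ell_i-\ell_i^{\star}$ by $(u_2/u_1)^2$. This is the first claim: the variance grows.

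For the second claim, we translate the variance growth into the LDD. For TV distance, $\mathsf{Dis}(\ell_i,\ell_i^\star)\approx \tfrac12\|\mathrm{diag}(p^\star)-p^\star p^{\star\top})(\ell_i-\ell_i^\star)\|_1$, which is $1$-homogeneous in the deviation. For KL divergence, $\mathsf{Dis}\approx \tfrac12 (\ell_i-\ell_i^\star)^\top \mathrm{Cov}_{p^\star}(\ell_i-\ell_i^\star)$, which is $2$-homogeneous. In either case, under the linear-Gaussian approximation, the distance at precision $u$ has the law of $u^{m}Z_i$ for $m\in\{1,2\}$ and a fixed random variable $Z_i$. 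The LDD at precision $u$ is therefore a scale family in $u$. Lowering precision pushes every quantile outward and stochastically dominates the higher-precision LDD, which is the claimed progressive broadening.

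The main obstacle is the validity of the linearization over long autoregressive horizons. Errors accumulate through the recurrence $h_i \leftarrow G_\theta(F_\theta(h_{i-1}),d_i)$, and the Jacobian products could grow. We will first try to bound this using the Lipschitz constants of normalized transformer blocks. Since LayerNorm/RMSNorm keeps hidden states bounded, this should give a per-step contraction or mild growth, keeping the $O(u^2)$ remainder uniformly small for BF16 and FP8. If that fails, we will state the proposition as holding to first order in $u$, and support the broadening claim empirically via the measured LDDs rather than claim a fully rigorous bound.
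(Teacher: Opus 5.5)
Your argument is correct at the level of rigor the statement allows. It shares the paper's skeleton but goes considerably further. The common skeleton: fix the decisions so the execution is a continuous map, and treat precision reduction as pure zero-mean noise with no model-bias term, in contrast to substitution.

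The differences are in what is derived versus asserted. The paper simply posits $\mathrm{Var}(\eta_i^{\mathrm{FP8}}) > \mathrm{Var}(\eta_i^{\mathrm{FP16}})$ and leaves the step from larger variance to a broader LDD implicit. You instead derive the variance growth from a probabilistic rounding model and first-order propagation along the aligned trajectory, giving covariance $C_i u^2$. You then prove the ``thus'' explicitly: linearized TV and second-order KL are homogeneous of degree $1$ and $2$ in the logit deviation. Hence the LDD is a scale family in $u$, and the lower-precision LDD stochastically dominates the higher-precision one.

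What each route buys:
\begin{itemize}
\item The paper's version is shorter. It makes its real point (substitution is bias, quantization is variance) with a single decomposition that covers both propositions.
\item Yours yields a quantitative scaling and an actual argument for broadening. The price is linearization and Gaussianity. These are credible for the bulk of the LDD, but not obviously for the far tail ($\mathsf{TV}>0.1$) that the detection rule thresholds on, as you acknowledge.
\end{itemize}

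Two refinements are worth making.
\begin{itemize}
\item Real FP8 deployments mix FP8 operands with higher-precision accumulation and normalization, so a single-$u$ scale family is idealized. Your conclusion still survives. The error covariance increases in Loewner order, and Anderson's inequality shows that enlarging the covariance of a centered Gaussian stochastically increases any even convex function of it. That covers both of your linearized distances.
\item A fixed rounded weight tensor is deterministic. Treating $\Delta\theta$ as zero-mean is therefore an ensemble statement over inputs and steps, not a per-request one. The paper hides the same issue by setting $\theta^{(A)}=\theta^{(B)}$ for FP8.
\end{itemize}

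Also fix the unbalanced parenthesis in your TV formula.
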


Both propositions are proved in Appendix \ref{app:logit-bias}.

\begin{proposition} \label{prop:token_report}
Token overreporting is a special case of model substitution under the hybrid computation model.
\end{proposition}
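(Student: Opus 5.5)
The plan is to make the statement precise inside the hybrid computation model $\mathcal{M}_{\theta} = (E_{\theta}, F_{\theta}, S, G_{\theta}, D)$ and show that a token-overreporting server is exactly a server running some modified tuple $\mathcal{M}' \neq \mathcal{M}_{\theta}$. By the threat model, an overreporting server executes every recurrent hybrid step faithfully but replaces the reconstruction map $D$ with some $D'$. Here $D'$ inflates the count, i.e.\ $(\vec{y}, T') := D'(\{d_i\})$ with $T' > T$ on a non-negligible set of queries.

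The first step is to observe that every component of the tuple is part of ``the model''. Model substitution is a server that executes $\mathcal{M}' = (E', F', S', G', D')$ with $\mathcal{M}' \neq \mathcal{M}_{\theta}$ on at least an $\alpha$ fraction of queries. Token overreporting is the instance $\mathcal{M}' = (E_{\theta}, F_{\theta}, S, G_{\theta}, D')$. Moreover, $\mathsf{Srv}(\vec{x}) = (\vec{y}, T') \neq (\vec{y}, T) = \mathcal{M}_{\theta}(\vec{x})$ whenever inflation occurs, so such a server is $\alpha$-dishonest in the sense of the auditing goal.

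The second step is to reduce $D'$ to a substitution that the LDD actually detects, since $D$ itself produces no logits. Any reported count $T' > T$ must be backed by $T'$ discrete decisions $d_1,\dots,d_{T'}$ that the server claims were generated. Under control-flow alignment, the auditor forces $\mathcal{M}^{\star}$ to follow these decisions. So the server must present a decision trace $\{d_i\}$, containing the extra hidden tokens, that is consistent with the honest $D$ producing $T'$. The server did not compute the extra $d_i$ by running $F_{\theta}$ and $S$. Equivalently, it ran a substituted transition $(F', S')$ that emitted them, for example padding tokens or tokens from a cheaper generator. This recasts the deviation as substitution of the continuous and discrete components on those steps. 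Proposition~\ref{prop:logit_bias_variance} then applies: the logits $\ell_i'$ implied by the fabricated steps carry a systematic bias relative to $\ell_i^{\star}$, which enlarges the LDD.

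The main obstacle is this second step: $D$ is deterministic and has no logits, so a naive reading would say overreporting leaves no LDD signature. To close the gap, I will argue that the verifiable-computation layer only accepts a $D$ equal to the claimed public reconstruction function. If $D$ is fixed and proven, the only remaining freedom is in the trace $\{d_i\}$. Inflating $T$ then forces fabricated decisions, i.e.\ a substitution of $(F_{\theta}, S)$ on a subset of steps. If instead $D'$ is used outright, it is itself a change of $\mathcal{M}_{\theta}$ and is rejected deterministically by the proof of $D$. In both cases the deviation is a model substitution, which establishes the proposition.
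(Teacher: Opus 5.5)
Your proposal is correct at the level of rigor the statement admits, and it shares the paper's core idea: an overreporting server runs some tuple $\mathcal{M}' \neq \mathcal{M}_{\theta}$ whose output differs from $\mathcal{M}_{\theta}(\vec{x})$ only in the count. It is organized differently, though.

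The paper stays inside the abstract hybrid model and exhibits one explicit substitute, $\mathcal{M}' = (E_{\theta}, F'_{\theta}, S, G'_{\theta}, D')$. It appends $K$ dummy decisions $\bar{d}$ whose steps leave the hidden state unchanged, and lets $D'$ count them, so $\mathcal{M}'(\vec{x}) = (\vec{y}, T+K)$ against $(\vec{y}, T)$. It never touches the verification layer.

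Your Step~1 reaches the same conclusion more economically by swapping only $D$, which is literally how the threat model describes overreporting. Your Step~2 then adds a protocol-level case split the paper does not make. Algorithm~\ref{alg:vc} recomputes $d_i := S(\ell_i, r)$ from the committed logits and aborts unless $(\vec{y}, T) = D(\{d_i\})$ for the committed $D$. So a swapped $D$ is caught outright, and inflation has to go through fabricated steps. These are the paper's dummy steps seen from the verifier's side, and they place the deviation in exactly the quantities $\mathsf{Dis}(\ell_i, \ell_i^{\star})$ that the LDD records. This explains why the reduction is useful for detection, which the paper leaves implicit.

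Step~2 does rest on points the definitional argument never needs:
\begin{itemize}
\item $D$ is not public but part of the private model, bound only through $\psi_M$. Deterministic rejection therefore presumes the commitment contains the honest $D$.
\item $S$ is recomputed inside the VC, so the server's only freedom is the committed $\ell_i$. That is a substituted $F$, not an $S'$.
\item The conclusion that fabricated steps enlarge the LDD inherits the informal status of Proposition~\ref{prop:logit_bias_variance}. The per-request signal also scales with the fraction of fabricated steps.
\end{itemize}
None of this affects the proposition itself, which your Step~1 already settles under the same $T = N$ simplification the paper makes.
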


A proof sketch is token overreporting can be viewed as adding some dummy recurrent hybrid steps to the model.
A complete proof is presented at Appendix \ref{app:overbill}. 
\section{Immaculate: An Auditing Framework for LLM Execution}

In this section, we present \name, a practical auditing framework for detecting $\alpha$-dishonest execution of large language models (LLMs) by an untrusted service provider.
At a high level, \name adopts randomized auditing to reduce proving cost, and combines LDD with verifiable computation to solve intrinsic numerical non-determinism.



\subsection{Randomized Auditing} \label{sec:random}

Our key observation for significantly reducing the proving cost is that it is unnecessary to prove every query. 
Instead, the auditor can adopt \emph{randomized auditing} by proving only a small random subset of queries. 
If a server deviates on an $\alpha$-fraction of requests, such behavior can be detected with overwhelming probability from this subset. 
Importantly, the required number of audited queries depends only on $\alpha$ and the desired confidence level, rather than on the total service volume.

\paragraph{Example.}
Assume a method can detect malicious responses with zero false positive and only $1\%$ detection rate.
If the server cheats on a fraction $\alpha = 0.1$ of requests, then achieving an overall detection probability of $95\%$ (i.e., evasion probability $\eta = 5\%$)
requires only
\[
    N = \frac{\log \eta}{\log (1 - \alpha \cdot 1\%)} \;\approx\; 3{,}000
\]
audited queries.
Since a production LLM service may process billions of requests per day, the cost
of proving only a few thousand queries is amortized to negligible overhead.

\subsection{Reproducibility via Discrete-State Commitments}

Exact reproduction of LLM execution is infeasible due to unavoidable numerical non-determinism.
Instead of proving bitwise equivalence, \name verifies that the execution remains within an admissible numerical deviation from a reference full-precision model, using the LDD metric.

At initialization, the model owner publishes a cryptographic commitment to the claimed \emph{full-precision} model $\mathcal{M}_{\theta^{\star}}$.
During inference, the server runs finite-precision model $\mathcal{M}_{\theta}$, recording and committing to $\{\ell_i\}$ at each discrete decision.
Importantly, these committed logits enables reproducing every discrete selection result.

For an audited query, server proves the LDD via VC, using the committed runtime logits $\braces{\ell_i}$.
Concretely, the VC proof establishes that:
\begin{enumerate}
    \item discrete decisions are derived from the committed logits $\{\ell_i\}$;
    \item continuous transformations are computed using $\mathcal{M}_{\theta^{\star}}$, obtaining $\braces{\ell^{\star}_i}$;
    \item the final output $(\vec{y}, T)$ is consistent with the discrete decisions;
    \item distribution of distance between $\{\ell_i\}$ and $\braces{\ell^{\star}_i}$ is output.
\end{enumerate}

\paragraph{Optimization: Top-$K$ distance.}
As top-$K$ selection is the most common discrete selection method in LLMs, we design a specialized top-$K$ distance metric, allowing the server to cache and commit only the $K$ selected indices instead of the full logits vector.  
The detailed method is provided in Appendix \ref{app:topk-dis}.

\begin{figure}
    \centering
    \includegraphics[width=0.6\linewidth]{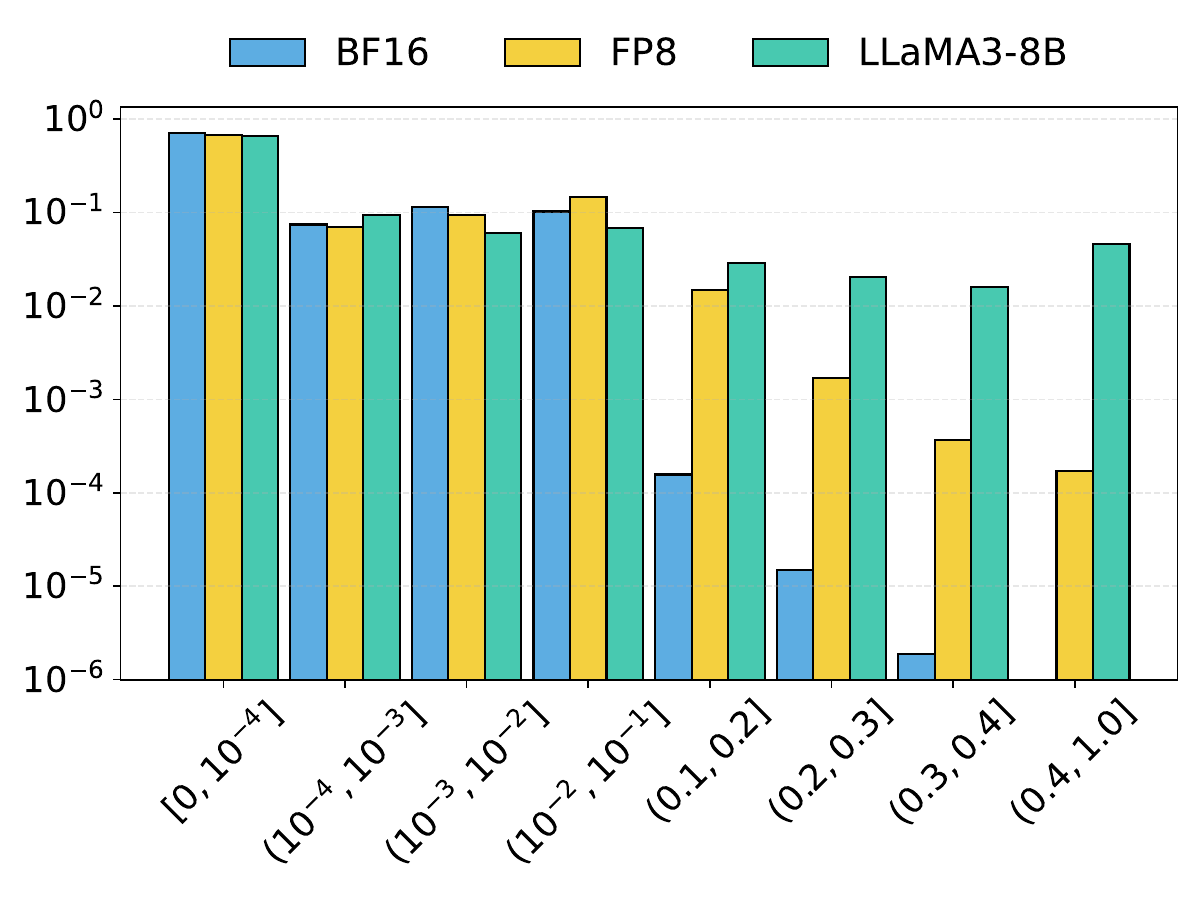}
    \caption{Logit TV-distance distribution of LLaMA3-70B. Probabilities are displayed on a logarithmic y-axis to better capture the tail behavior.}
    \label{fig:global_tv:llama}
\end{figure}

\subsection{Utilizing LDD in Practice.} \label{sec:ldd-use}

A central goal of our scheme is to produce observable and verifiable auditing footprints—concrete statistical evidence that can be independently examined to assess service integrity. 
The LDD serves as such an auditing footprint. 
There are many possible ways to utilize LDD for integrity auditing. 
In this work, we present LDD as a clear metric to distinguish between benign and malicious responses, along with some simple yet effective approaches for leveraging LDD in practice.

Empirically, we test LDD over thousands of requests, and present a representative result in Figure \ref{fig:global_tv}.
It aggregates benign executions and malicious executions over thousands of requests, and find they induce well-separated distance distributions.
We can observe that a particularly discriminative statistic is the tail probability, e.g. $\Pr[\mathsf{TV}(\ell_i, \ell^{\star}_i) > 0.1]$. 

Based on this observation, we can detect malicious requests by evaluating the density of ``tailing logits".
For a request involving $T$ discrete steps, we define
\[
p(t_1) := \frac{1}{T} \left| \left\{ i \in [T] : \mathsf{TV}(\ell_i, \ell^{\star}_i) > t_1 \right\} \right|.
\]
A response is flagged as malicious (e.g. computed vis FP8) if $p(t_1) > t_2$, where thresholds $t_1$ and $t_2$ are auditing parameters.
The choosing strategy of $t_1$ and $t_2$ is provided in Append \ref{app:ceremony}.

\subsection{End-to-end System}

We now describe the end-to-end execution flow of \name, showing how randomized auditing, logit commitments, and verifiable computation are integrated into a practical LLM inference service.



\paragraph{Setup: model commitment.}
At deployment time, the model provider commits to the claimed full-precision reference model
$\mathcal{M}_{\theta^{\star}}$ by publishing a cryptographic hash commitment.
This commitment binds the server to a specific reference model while preserving full model confidentiality.

\paragraph{Online inference and logging.}
For each incoming query $\vec{x}$, the server performs standard LLM inference.
During execution, the server additionally records the logits $\{\ell_i\}$ at each discrete decision point
(e.g., token selection or expert routing).
Along with the response $(\vec{y}, T)$, $\mathsf{Srv}$ also returns a commitment to $\{\ell_i\}$ to the user.

We estimate logging cost of commercial open-source LLMs
Even the largest open-source models require caching only about 1\,KB of data per token, so this logging and commitment step incurs negligible overhead.
The cached messages can be deleted after a prescribed threshold if the user does not request auditing,
and therefore this mechanism does not introduce significant storage costs.

\paragraph{Randomized auditing.}
$\mathsf{Adt}$ submits random queries to $\mathsf{Srv}$ while behaving indistinguishably from a normal user.
After receiving responses, $\mathsf{Adt}$ \emph{immediately} reveals its auditor identity and requests an audit proof from $\mathsf{Srv}$. 
In response, $\mathsf{Srv}$ invokes a verifiable computation (VC) procedure to compute and prove the corresponding LDD. 
These LDDs serve as observable auditing footprints, providing statistical evidence of server-side
execution integrity.

Formal pseudo-code description is given in Appendix \ref{app:alg}.
In Appendix \ref{app:adaptive}, we also analyze a strategic attacks from $\mathsf{Srv}$, and present the best strategy for a cost-incentive server is honestly commit runtime logits of its deployed model.

\section{Experiments} \label{sec:exp}

We evaluate \name along two complementary dimensions:
(1) its completeness and detection effectiveness, and
(2) the system's imposed overhead. Concretely, our experiments answer the following questions:
\begin{enumerate}
    \item Does the proposed logit distance distribution (LDD) meaningfully characterize approximation fidelity between a deployed execution and the claimed full-precision model?
    \item Can LDD reliably distinguish benign executions from malicious behaviors such as aggressive quantization and model substitution?
    \item What per-request detection probability and false positive rate can be achieved using LDD-based decision rules?
    \item When combined with randomized auditing, does \name provide strong completeness and soundness guarantees against an $\alpha$-dishonest server?
    \item What is the end-to-end performance overhead introduced by \name for model providers?
\end{enumerate}

\subsection{Experimental Setup}

\paragraph{System implementation.}
We build our inference system atop \texttt{vLLM}~\cite{kwon2023efficient}, a widely used high-performance LLM inference framework.
Verifiable computation (VC) is implemented using HuggingFace Transformers~\cite{wolf2019huggingface} and executed in FP32 arithmetic within a TDX enclave, serving as the reference full-precision execution.
For all experiments, we adopt Top-20 token sampling strategy.

\paragraph{Hardware.} 
All experiments are conducted on NVIDIA RTX 6000 Pro GPUs with 96 GB of GPU memory.
We perform inference using tensor parallelism with a degree of 2 across GPUs.

\paragraph{Datasets.}
We evaluate on GSM8K, TriviaQA, and WebQuestions, representing mathematical reasoning, factual QA, and open-domain QA, respectively.
For scalability, we subsample 500 prompts per dataset, resulting in a total of 1,500 evaluation queries.
An initial setup phase is conducted using the first 200 examples from each of GSM8K, TriviaQA, and WebQuestions.
We randomly sample 200 prompts from each dataset to form an independent setup set for calibrating auditing thresholds; all remaining prompts are held out and used exclusively for evaluation.

\paragraph{Models.}
We evaluate both dense and mixture-of-experts (MoE) architectures, including dense models LLaMA3-70B, Qwen3-32B and MoE models Qwen3-30B-A3B, DeepSeek-V2-Lite.

\paragraph{Attacks.} 
We evaluate \name against model substitution and aggressive quantization.
We omit token overreporting from separate empirical evaluation because we have proved in Section \ref{sec:hybrid-compute}, it can be reduced to a special case of model substitution.

For model substitution attacks, we evaluates 1) Substitute LLaMA3-70B with LLaMA3-8B; 2) Substitute Qwen3-32B with Qwen3-14B.
We do not explore model substitution attacks for MoE models because larger versions are too costly to run.
For quantization attacks, we assume all benign deployments are under BF16, and consider FP8 as quantization attacks, following the setting in \cite{cai2025you}.

\subsection{Global Logit Discrepancy Distribution}

We first evaluate whether the global LDD reflects approximation fidelity under different deployment regimes.
Specifically, we compare three settings against the same full-precision reference model: 1) benign BF16 execution; 2) FP8 quantized execution; 3) model substitution (for dense models only).


\begin{table}[h]
\caption{Proportion of TV distance larger than certain threshold} \label{tab:global_threshold}
\begin{adjustbox}{width=\columnwidth}
\begin{tabular}{|c|c||c|c|c|}
\hline
Model  &  Deploy  & $> 0.1$ & $> 0.2$ & $> 0.3$  \\
\hline \hline
\multirow{3}{*}{LLaMA3} & BF16 & $0.017\%$  &  $1.7\times 10^{-5}$   &   $1.9 \times 10^{-6}$     \\
\cline{2-5}
                       & FP8  &  $1.7\%$  &  $0.23\%$   &  $0.054\%$  \\
\cline{2-5}
                       & Sub. &  $11\%$  &  $8.2\%$   &  $6.2\%$  \\
                       \hline
\multirow{3}{*}{Qwen3} & BF16 &  $0.17\%$  &  $0.025\%$  &  $6.0 \times 10^{-5}$  \\
\cline{2-5}
                       & FP8  &  $4.8\%$  &  $1.1\%$   &  $0.42\%$  \\
\cline{2-5}
                       & Sub. &  $28\%$  &  $20\%$  &  $15\%$  \\        
                       \hline
\multirow{2}{*}{Qwen3MoE} & BF16 &  $0.074\%$  &  $2.6 \times 10^{-5}$   &  $4.7 \times 10^{-6}$   \\
\cline{2-5}
                       & FP8  &  $5.5\%$   &   $1.0\%$    &  $0.23\%$  \\
                       \hline
\multirow{2}{*}{Deepseek} & BF16 & $6.6 \times 10^{-5}$   &  $6.0 \times 10^{-6}$    &  $8.5 \times 10^{-7}$ \\
\cline{2-5}
                       & FP8  &  $1.6\%$  &  $0.19\%$   &  $0.053\%$   \\
                       \hline
\end{tabular}
\end{adjustbox}
\end{table}

Figure~\ref{fig:global_tv} has visualized the aggregated logit TV-distance distributions across all evaluated prompts.
Across all evaluated dense and MoE models, benign BF16 execution induces sharply concentrated distributions with rapidly decaying tails. 
In contrast, FP8 quantization substantially increases the tail mass at moderate and large logit distances, while model substitution causes a pronounced right-shift, yielding orders-of-magnitude higher probability mass in the extreme tail.

To quantify the separation, we further report in Table~\ref{tab:global_threshold} the probability that the total-variation (TV) distance between deployed and reference logits exceeds different thresholds. 
It shows that FP8 quantization increases these tail probabilities by several orders of magnitude, while model substitution amplifies them further.
These results confirm that tail events in the LDD provide a highly discriminative signal for distinguishing benign execution from economically motivated deviations, motivating the threshold-based detection rule used in subsequent experiments.

In Appendix \ref{app:global-ldd}, we also present global LDDs beyond TV distances.
All results show distinct tail-probability across different deployments.

\subsection{Per-request Decision Based on LDD}

\begin{table}[h]
\caption{Malicious request detection rate (the rate of malicious requests that can be detected under $\mathrm{FP} = 10^{-5}$). 
As shown in Section \ref{sec:random}, 1\% per-request detection probability is sufficient for effective randomized auditing. } \label{tab:detection}
\begin{adjustbox}{width=\columnwidth}
\begin{tabular}{|c|c||c|c|c|}
\hline
Model  &  Deploy  & GSM8K & TriviaQA & WebQuestions  \\
\hline \hline
\multirow{2}{*}{LLaMA3} & FP8  &  $9.0\%$  &  $9.7\%$   &  $5.3\%$  \\
\cline{2-5}
                       & Sub. &  $42\%$  &  $95\%$   &  $99\%$  \\
                       \hline
\multirow{2}{*}{Qwen3} & FP8  &  $1.3\%$  &  $3.6\%$   &  $2.0\%$  \\
\cline{2-5}
                       & Sub. &  $96\%$  &  $96\%$  &  $97\%$  \\        
                       \hline
Qwen3MoE &  FP8  &  $2.0\%$   &   $2.1\%$    &  $5.0\%$  \\
                       \hline
Deepseek &  FP8  &  $10.3\%$  &  $3.0\%$   &  $3.7\%$   \\
                       \hline
\end{tabular}
\end{adjustbox}
\end{table}

While global LDDs exhibit clear separation, auditing decisions must ultimately be made at the per-request level. In practice, a malicious LLM provider may apply heterogeneous deployment strategies across requests, thereby blurring the global distribution and reducing its discriminative power.

Following Section \ref{sec:ldd-use}, we define a per-request statistic $p(t_1)$ based on the fraction of logit distance exceeding a threshold $t_1$.
The request is flagged as malicious if $p(t_1) > t_2$.
The auditing parameters $t_1, t_2$ is computed via method given in Section \ref{sec:ldd-use}.


Table \ref{tab:detection} present the resulting detection rates.
For quantization attacks, a request is detected with probability at least $1.3\%$, whereas model substitution is detected with probability larger than $40\%$.

Accurately estimating the per-request false positive rate under benign execution is challenging due to the extreme rarity of tail events. 
Owing to computational constraints, our experiments evaluate only on thousands of requests; consequently, all benign requests observed in practice are correctly classified, making direct empirical estimation of the false positive rate infeasible.

To address this limitation, we adopt an extreme value theory (EVT)–based approach to model the tail behavior of benign LDDs. 
Specifically, we fit an EVT model to the extreme tail of the benign distribution $p(t_1)$ and estimate the probability that a benign request produces an LDD exceeding $t_2$. 
Results show that among all datasets and models, the false positive rate is smaller than $10^{-5}$.

\paragraph{Completeness and soundness.}
Assume \name audits approximately 3,000 requests per day.
To achieve overwhelming completeness, among 3,000 requests, $\mathsf{Adt}$ allows at most 3 of them classified to be malicious. 
Consider the at most $10^{-5}$ per-request false positive rate, the probability of rejecting a benign server is
\[
1 - \sum_{k=0}^{3} \binom{3000}{k} (10^{-5})^k (1 - 10^{-5})^{3000-k} \le 10^{-7},
\]
representing overwhelming completeness.

For an $\alpha=0.1$ dishonest server, each audited request is detected with probability at least $10^{-3}$.
The probability of observing 4 or more detections in a day is therefore
\[
1 - \sum_{k=0}^{3} \binom{3000}{k} (10^{-3})^k (1-10^{-3})^{3000-k} \ge 0.3.
\]
Over a month-long horizon, the probability of persistent evasion becomes negligible.

\subsection{Auditing Parameter Study}

\begin{table}[h]
\caption{Detection rate and false positive rate under different hyperparameter} \label{tab:ablation}
\begin{adjustbox}{width=\columnwidth}
\begin{tabular}{|c|c||c|c|c|c|}
\hline
Hyper. &  & LLaMA3 & Qwen3 & Qwen3MoE &  Deepseek \\
\hline \hline
\multirow{2}{*}{0.02}  &  Detect &  $2.1\%$  &  $0.44\%$   &  $0.93\%$  &  $1.3\%$ \\
\cline{2-6}
 &  FP &  $2 \times 10^{-9}$  & $2 \times 10^{-8}$  &  $5 \times 10^{-10}$  &  $8 \times 10^{-10}$ \\
\hline 
\multirow{2}{*}{0.03} &  Detect  & $6.2\%$  & $0.67\%$  &  $2.1\%$   &  $3.0\%$  \\
\cline{2-6}
&  FP &  $1 \times 10^{-8}$  &  $3 \times 10^{-7}$   &  $1 \times 10^{-8}$  &  $2 \times 10^{-8}$ \\
\hline
\multirow{2}{*}{0.04} & Detect & $6.2\%$  &  $1.2\%$    &   $2.1\%$  &  $3.0\%$  \\
\cline{2-6}
&  FP &  $1 \times 10^{-8}$  &  $1 \times 10^{-6}$    &  $1 \times 10^{-8}$  &  $2 \times 10^{-8}$ \\
\hline
\multirow{2}{*}{0.05} & Detect & $6.2\%$ &  $2.3\%$  &  $3.6\%$  &  $5.6\%$ \\    
\cline{2-6}
&  FP &  $1 \times 10^{-8}$ &  $4 \times 10^{-6}$   &  $9 \times 10^{-7}$  &  $8 \times 10^{-8}$ \\    
\hline
\multirow{2}{*}{0.06} & Detect & $9.4\%$   &   $3.3\%$  &  $3.6\%$  &  $8.8\%$  \\
\cline{2-6}
&  FP &  $4 \times 10^{-7}$  &  $2 \times 10^{-5}$   &  $9 \times 10^{-7}$  &  $5 \times 10^{-7}$ \\    
                       \hline
\multirow{2}{*}{0.07} & Detect &  $9.4\%$  &  $5.1\%$   &  $5.2\%$ & $8.8\%$ \\
\cline{2-6}
&  FP &  $4 \times 10^{-7}$  &  $6 \times 10^{-5}$    &  $4 \times 10^{-6}$  &  $5 \times 10^{-7}$ \\    
                       \hline
\end{tabular}
\end{adjustbox}
\end{table}

The derivation of the auditing parameters is largely heuristic, so we explore a wider range of parameter choices to demonstrate the robustness of our scheme.  
When computing $t_1$ and $t_2$ via the ceremony, we employ an ad-hoc hyperparameter that accepts only $(t_1, t_2)$ pairs achieving a 5\% detection rate on the ceremony dataset. 
Here, we vary this hyperparameter around the 5\% threshold to derive different auditing parameters $t_1$ and $t_2$, and report their corresponding false positive rates and detection rates in Table~\ref{tab:detection}.  
Due to space constraints, we aggregate the results from three datasets to present the overall false positive rate and detection rate.
Results show that for all derived auditing parameters, the FP rate and detection rate are always acceptable.

\subsection{Overhead to Benign Execution}

We finally evaluate the efficiency impact of \name.
Since verifiable computation is triggered only on a small audited subset, the dominant overhead arises from storing and committing lightweight runtime values.

Across all evaluated models, the end-to-end throughput overhead for a benign server is below $1\%$, demonstrating practical deployability.
The VC deployed in CPU TEE is hundreds of times slower than inference.
Consider that the proportion of audited request is smaller than $10^{-5}$, the overall cost is negligible.

\begin{table}[h]
    \caption{Overhead Evaluation. `Thp.' denotes the inference throughput loss. `Request VC' denotes the runtime of VC on CPU-TEE compared with GPU inference}
    \centering
    \begin{tabular}{|c|c|c|}
    \hline
        & Thp. & Request VC \\
    \hline
       LLaMA3-70B & $0.3\%$ &  $400 \times$  \\
    \hline 
       Qwen3-32B & $0.3\%$ &  $400 \times$  \\
    \hline
      Qwen3-30B-A3B & $0.9\%$ &  $900 \times$  \\
    \hline
      DeepSeek-V2-Lite & $1.0\%$ &  $800 \times$  \\
    \hline
    \end{tabular}
\end{table}



\section{Conclusion}

We have introduced \name, a practical and robust auditing framework for LLM inference services that operate as black-box APIs. 
By combining selective verifiable computation with a novel Logit Distance Distribution (LDD) metric, \name enables auditors to detect economically motivated deviations—such as model substitution, aggressive quantization, and token overreporting—without requiring access to model internals or relying on trusted hardware.

Looking ahead, we believe that \name lays the foundation for a new class of auditing frameworks. Future work may explore additional optimizations, particularly in leveraging LDD to more effectively identify malicious requests. 
Ultimately, our framework represents a step toward greater transparency, accountability, and trust in commercial LLM services.


\bibliography{BIB}

@article{sun2025coin,
  title={CoIn: Counting the Invisible Reasoning Tokens in Commercial Opaque LLM APIs},
  author={Sun, Guoheng and Wang, Ziyao and Tian, Bowei and Liu, Meng and Shen, Zheyu and He, Shwai and He, Yexiao and Ye, Wanghao and Wang, Yiting and Li, Ang},
  journal={arXiv preprint arXiv:2505.13778},
  year={2025}
}

@inproceedings{qu2025zkgpt,
  title={zkGPT: An Efficient Non-interactive Zero-knowledge Proof Framework for LLM Inference},
  author={Qu, Wenjie and Sun, Yijun and Liu, Xuanming and Lu, Tao and Guo, Yanpei and Chen, Kai and Zhang, Jiaheng},
  booktitle={34st USENIX Security Symposium (USENIX Security 25)},
  year={2025}
}

@inproceedings{sun2024zkllm,
  title={zkllm: Zero knowledge proofs for large language models},
  author={Sun, Haochen and Li, Jason and Zhang, Hongyang},
  booktitle={Proceedings of the 2024 on ACM SIGSAC Conference on Computer and Communications Security},
  pages={4405--4419},
  year={2024}
}

@article{brown2020language,
  title={Language models are few-shot learners},
  author={Brown, Tom and Mann, Benjamin and Ryder, Nick and Subbiah, Melanie and Kaplan, Jared D and Dhariwal, Prafulla and Neelakantan, Arvind and Shyam, Pranav and Sastry, Girish and Askell, Amanda and others},
  journal={Advances in neural information processing systems},
  volume={33},
  pages={1877--1901},
  year={2020}
}

@article{chowdhery2023palm,
  title={Palm: Scaling language modeling with pathways},
  author={Chowdhery, Aakanksha and Narang, Sharan and Devlin, Jacob and Bosma, Maarten and Mishra, Gaurav and Roberts, Adam and Barham, Paul and Chung, Hyung Won and Sutton, Charles and Gehrmann, Sebastian and others},
  journal={Journal of Machine Learning Research},
  volume={24},
  number={240},
  pages={1--113},
  year={2023}
}

@inproceedings{adamopoulou2020overview,
  title={An overview of chatbot technology},
  author={Adamopoulou, Eleni and Moussiades, Lefteris},
  booktitle={IFIP international conference on artificial intelligence applications and innovations},
  pages={373--383},
  year={2020},
  organization={Springer}
}

@article{chen2021evaluating,
  title={Evaluating large language models trained on code},
  author={Chen, Mark},
  journal={arXiv preprint arXiv:2107.03374},
  year={2021}
}

@article{wei2022chain,
  title={Chain-of-thought prompting elicits reasoning in large language models},
  author={Wei, Jason and Wang, Xuezhi and Schuurmans, Dale and Bosma, Maarten and Xia, Fei and Chi, Ed and Le, Quoc V and Zhou, Denny and others},
  journal={Advances in neural information processing systems},
  volume={35},
  pages={24824--24837},
  year={2022}
}

@misc{openai_api,
  title = {OpenAI API Documentation},
  author = {{OpenAI}},
  year = {2023},
  howpublished = {\url{https://platform.openai.com/docs}}
}

@misc{google_vertex_ai,
  title = {Vertex AI Generative AI},
  author = {{Google Cloud}},
  year = {2023},
  howpublished = {\url{https://cloud.google.com/vertex-ai}}
}

@misc{anthropic_api,
  title = {Anthropic API Documentation},
  author = {{Anthropic}},
  year = {2023},
  howpublished = {\url{https://docs.anthropic.com}}
}

@misc{gpt_lazy1,
  title={GPT4-Turbo more “stupid/lazy” - It’s not a GPT4}, 
  url={https://community.openai.com/t/gpt4-turbo-more-stupid-lazy-its-not-a-gpt4/608008},
  author={webtailken},
  year={2024},
}

@misc{gpt_lazy2,
  title={OpenAI did made GPT3.5 more stupid?}, 
  url={https://community.openai.com/t/openai-did-made-gpt3-5-more-stupid/262979},
  author={theevildays},
  year={2024},
}

@article{chen2023frugalgpt,
  title={Frugalgpt: How to use large language models while reducing cost and improving performance},
  author={Chen, Lingjiao and Zaharia, Matei and Zou, James},
  journal={arXiv preprint arXiv:2305.05176},
  year={2023}
}

@inproceedings{mohammadi2025evaluation,
  title={Evaluation and benchmarking of llm agents: A survey},
  author={Mohammadi, Mahmoud and Li, Yipeng and Lo, Jane and Yip, Wendy},
  booktitle={Proceedings of the 31st ACM SIGKDD Conference on Knowledge Discovery and Data Mining V. 2},
  pages={6129--6139},
  year={2025}
}

@article{cai2025you,
  title={Are you getting what you pay for? auditing model substitution in llm apis},
  author={Cai, Will and Shi, Tianneng and Zhao, Xuandong and Song, Dawn},
  journal={arXiv preprint arXiv:2504.04715},
  year={2025}
}

@inproceedings{tan2025pipellm,
  title={Pipellm: Fast and confidential large language model services with speculative pipelined encryption},
  author={Tan, Yifan and Tan, Cheng and Mi, Zeyu and Chen, Haibo},
  booktitle={Proceedings of the 30th ACM International Conference on Architectural Support for Programming Languages and Operating Systems, Volume 1},
  pages={843--857},
  year={2025}
}

@inproceedings{jouppi2017datacenter,
  title={In-datacenter performance analysis of a tensor processing unit},
  author={Jouppi, Norman P and Young, Cliff and Patil, Nishant and Patterson, David and Agrawal, Gaurav and Bajwa, Raminder and Bates, Sarah and Bhatia, Suresh and Boden, Nan and Borchers, Al and others},
  booktitle={Proceedings of the 44th annual international symposium on computer architecture},
  pages={1--12},
  year={2017}
}

@misc{aws_inferentia,
  title={AWS Inferentia}, 
  url={https://aws.amazon.com/ai/machine-learning/inferentia/},
  author={AWS},
  year={2022},
}

@inproceedings{zhou2025accelerating,
  title={Accelerating Model Training on Ascend Chips: An Industrial System for Profiling, Analysis and Optimization},
  author={Zhou, Yuhang and Wang, Zibo and Wang, Zhibin and Zhang, Ruyi and Tian, Chen and Wang, Xiaoliang and Dou, Wanchun and Chen, Guihai and Wang, Bingqiang and Tian, Yonghong and others},
  booktitle={2025 USENIX Annual Technical Conference (USENIX ATC 25)},
  pages={1387--1408},
  year={2025}
}

@inproceedings{chen2024zkml,
  title={Zkml: An optimizing system for ml inference in zero-knowledge proofs},
  author={Chen, Bing-Jyue and Waiwitlikhit, Suppakit and Stoica, Ion and Kang, Daniel},
  booktitle={Proceedings of the Nineteenth European Conference on Computer Systems},
  pages={560--574},
  year={2024}
}

@article{gao2024model,
  title={Model Equality Testing: Which Model Is This API Serving?},
  author={Gao, Irena and Liang, Percy and Guestrin, Carlos},
  journal={arXiv preprint arXiv:2410.20247},
  year={2024}
}

@incollection{merkle2019protocols,
  title={Protocols for public key cryptosystems},
  author={Merkle, Ralph C},
  booktitle={Secure communications and asymmetric cryptosystems},
  pages={73--104},
  year={2019},
  publisher={Routledge}
}

@incollection{goldwasser2019knowledge,
  title={The knowledge complexity of interactive proof-systems},
  author={Goldwasser, Shafi and Micali, Silvio and Rackoff, Chales},
  booktitle={Providing sound foundations for cryptography: On the work of shafi goldwasser and silvio micali},
  pages={203--225},
  year={2019}
}

@inproceedings{sabt2015trusted,
  title={Trusted execution environment: What it is, and what it is not},
  author={Sabt, Mohamed and Achemlal, Mohammed and Bouabdallah, Abdelmadjid},
  booktitle={2015 IEEE Trustcom/BigDataSE/Ispa},
  volume={1},
  pages={57--64},
  year={2015},
  organization={IEEE}
}

@inproceedings{gennaro2013quadratic,
  title={Quadratic span programs and succinct NIZKs without PCPs},
  author={Gennaro, Rosario and Gentry, Craig and Parno, Bryan and Raykova, Mariana},
  booktitle={Annual International Conference on the Theory and Applications of Cryptographic Techniques},
  pages={626--645},
  year={2013},
  organization={Springer}
}

@inproceedings{kwon2023efficient,
  title={Efficient memory management for large language model serving with pagedattention},
  author={Kwon, Woosuk and Li, Zhuohan and Zhuang, Siyuan and Sheng, Ying and Zheng, Lianmin and Yu, Cody Hao and Gonzalez, Joseph and Zhang, Hao and Stoica, Ion},
  booktitle={Proceedings of the 29th symposium on operating systems principles},
  pages={611--626},
  year={2023}
}

@article{wolf2019huggingface,
  title={Huggingface's transformers: State-of-the-art natural language processing},
  author={Wolf, Thomas and Debut, Lysandre and Sanh, Victor and Chaumond, Julien and Delangue, Clement and Moi, Anthony and Cistac, Pierric and Rault, Tim and Louf, R{\'e}mi and Funtowicz, Morgan and others},
  journal={arXiv preprint arXiv:1910.03771},
  year={2019}
}

@article{chen2024chatgpt,
  title={How is ChatGPT’s behavior changing over time?},
  author={Chen, Lingjiao and Zaharia, Matei and Zou, James},
  journal={Harvard Data Science Review},
  volume={6},
  number={2},
  year={2024},
  publisher={The MIT Press}
}

@inproceedings{eyuboglu2024model,
  title={Model changelists: Characterizing updates to ml models},
  author={Eyuboglu, Sabri and Goel, Karan and Desai, Arjun and Chen, Lingjiao and Monfort, Mathew and R{\'e}, Chris and Zou, James},
  booktitle={Proceedings of the 2024 ACM Conference on Fairness, Accountability, and Transparency},
  pages={2432--2453},
  year={2024}
}

@article{sun2025idiosyncrasies,
  title={Idiosyncrasies in large language models},
  author={Sun, Mingjie and Yin, Yida and Xu, Zhiqiu and Kolter, J Zico and Liu, Zhuang},
  journal={arXiv preprint arXiv:2502.12150},
  year={2025}
}

@article{ong2025toploc,
  title={Toploc: A locality sensitive hashing scheme for trustless verifiable inference},
  author={Ong, Jack Min and Di Ferrante, Matthew and Pazdera, Aaron and Garner, Ryan and Jaghouar, Sami and Basra, Manveer and Ryabinin, Max and Hagemann, Johannes},
  journal={arXiv preprint arXiv:2501.16007},
  year={2025}
}

@techreport{nvidia2023hoppercc,
  author      = {{NVIDIA}},
  title       = {Confidential Compute on NVIDIA Hopper H100},
  institution = {NVIDIA},
  year        = {2023},
}

@article{huang2025exploring,
  title={Exploring and mitigating adversarial manipulation of voting-based leaderboards},
  author={Huang, Yangsibo and Nasr, Milad and Angelopoulos, Anastasios and Carlini, Nicholas and Chiang, Wei-Lin and Choquette-Choo, Christopher A and Ippolito, Daphne and Jagielski, Matthew and Lee, Katherine and Liu, Ken Ziyu and others},
  journal={arXiv preprint arXiv:2501.07493},
  year={2025}
}

@article{yao2025nondeterminism,
  title={Nondeterminism-Aware Optimistic Verification for Floating-Point Neural Networks},
  author={Yao, Jianzhu and Su, Hongxu and Liao, Taobo and Cheng, Zerui and Zhang, Huan and Wang, Xuechao and Viswanath, Pramod},
  journal={arXiv preprint arXiv:2510.16028},
  year={2025}
}

@article{zhu2025auditing,
  title={Auditing Black-Box LLM APIs with a Rank-Based Uniformity Test},
  author={Zhu, Xiaoyuan and Ye, Yaowen and Qiu, Tianyi and Zhu, Hanlin and Tan, Sijun and Mannan, Ajraf and Michala, Jonathan and Popa, Raluca Ada and Neiswanger, Willie},
  journal={arXiv preprint arXiv:2506.06975},
  year={2025}
}

@article{velasco2025your,
  title={Is your LLM overcharging you? tokenization, transparency, and incentives},
  author={Velasco, Ander Artola and Tsirtsis, Stratis and Okati, Nastaran and Gomez-Rodriguez, Manuel},
  journal={arXiv preprint arXiv:2505.21627},
  year={2025}
}
\bibliographystyle{icml2026}

\newpage
\appendix

\section{Extended Related Work} \label{app:related}

Existing LLM auditing frameworks struggle to simultaneously achieve \emph{economy}, \emph{proprietorship}, and \emph{robustness}.  
Section~\ref{sec:related} discusses the limitations of cryptography-based approaches and GPU TEEs in terms of deployment cost and infrastructure requirements.  
In this section, we review additional auditing paradigms and analyze their limitations under the black-box service setting.

\paragraph{Direct Verification.}
When the model is fully public, auditing becomes straightforward.  
The most direct approach is to re-execute inference and compare the outputs with those returned by the service.  
TopLoc~\cite{ong2025toploc} accelerates verification by transforming decoding operations into prefilling, reducing the computational overhead of re-execution.  
Yao et al.~\cite{yao2025nondeterminism} propose verifying only a subset of randomly selected layers and integrate the auditing process with a blockchain-based logging mechanism.  
However, these methods require the auditor to access model weights and execution details, making them unsuitable for proprietary commercial models.

\paragraph{Detecting LLM-generated text.}
Another line of work attempts to audit black-box services by analyzing their output behavior.  
Prior studies track performance drift and behavioral changes over time~\cite{chen2024chatgpt,eyuboglu2024model}.  
Other methods aim to identify the deployed model by examining output distributions, including classifier-based fingerprinting~\cite{sun2025idiosyncrasies}, statistical discrepancy tests~\cite{gao2024model,zhu2025auditing}, and identity-style prompting~\cite{huang2025exploring}.  
However, these approaches typically require large query budgets and are vulnerable to randomized model substitution or decoding variability.  
Moreover, since they rely solely on observable text outputs, they cannot detect attacks such as token overreporting, which do not affect the returned content.

\paragraph{Token Overreport Detection.}
CoIn~\cite{sun2025coin} addresses the token overreporting issue by committing embeddings of hidden reasoning tokens in a Merkle hash tree for quantity verification, and by performing embedding-based relevance checks to detect low-effort or fabricated content.  
Nevertheless, a provider may still generate hidden tokens using a cheaper or simplified model while preserving semantic consistency, allowing the inflated sequence to pass such semantic validation.  
Complementary work by Velasco et al.~\cite{velasco2025your} analyzes the problem from an economic perspective, showing that the pay-per-token pricing mechanism inherently incentivizes overreporting under information asymmetry. They argue that, since provider revenue scales with reported token length while users cannot observe the internal generation process, strategic tokenization or count manipulation becomes financially attractive, and propose pay-per-character pricing as an incentive-compatible alternative.  
However, such pricing changes do not provide technical guarantees on faithful execution.

\paragraph{Summary.}
Existing approaches each address only part of the auditing challenge.  
Direct re-execution methods provide strong guarantees but require full model access, violating proprietorship.  
Output-based statistical methods are economical but lack robustness against adaptive or invisible deviations.  
Economic or billing-based solutions address incentive misalignment but do not verify execution correctness.  
In contrast, \name aims to achieve a balanced design that simultaneously ensures economy, preserves model confidentiality, and provides provable detection guarantees for execution deviations under black-box access.
\section{Algorithms} \label{app:alg}

The pseudo-code is presented at Algorithm \ref{alg:srv}, Algorithm \ref{alg:adt} and Algorithm \ref{alg:vc}.

\begin{algorithm}
\caption{$\mathsf{Srv}$: LLM Server's algorithm} \label{alg:srv}
\begin{algorithmic}[1]
    \REQUIRE LLM model $\mathcal{M}_{\theta} := (E_{\theta}, F_{\theta}, S, G_{\theta}, D)$
    \STATE $\psi_M := \mathsf{Hash}(\mathcal{M}_{\theta^\star})$, where $\mathcal{M}_{\theta^{\star}}$ is the full-precision version of $\mathcal{M}_{\theta}$.
    \STATE Publish $\psi_M$
    \WHILE{receiving $\angles{\mathsf{Request}, \vec{x}}$ from $\mathsf{Usr}$}
    \STATE $r \leftarrow_{\$} \braces{0, 1}^{\lambda}$
    \STATE $h_0 \leftarrow E_{\theta}(\vec{x})$.
    \FOR{$i = 1, 2, \cdots, N$}
    \STATE $\tilde{h}_i, \ell_i \leftarrow F_{\theta}(h_{i-1})$
    \STATE $d_i := S(\ell_i, r)$
    \STATE $h_i \leftarrow G_{\theta}(\tilde{h}_i, d_i)$
    \ENDFOR
    \STATE $\vec{y}, T := D(\braces{d_1, \cdots, d_N})$
    \STATE $\psi := \mathsf{Hash}(r, \braces{\ell_1, \cdots, \ell_N})$
    \STATE \textbf{send} $\angles{\mathsf{Response}, \vec{y}, T, \psi}$ to $\mathsf{Usr}$.

    \WHILE{receiving $\angles{\mathsf{Audit}}$ from $\mathsf{Adt}$}
    \STATE $w := \parens{\mathcal{M}_{\theta^{\star}}, r, \braces{\ell_1, \cdots, \ell_N}}$
    \STATE $\phi, \pi := \mathsf{VC}.\mathsf{prove}\parens{(\psi_M, \psi, \vec{x}, \vec{y}, T); w}$
    \STATE \textbf{send} $\angles{\mathsf{Proof}, \phi, \pi}$ to $\mathsf{Adt}$.
    \ENDWHILE
    \ENDWHILE
\end{algorithmic}
\end{algorithm}

\begin{algorithm}
\caption{$\mathsf{Adt}$: LLM Auditor's algorithm} \label{alg:adt}
\begin{algorithmic}[1]
    \REQUIRE Auditing parameter $\mathsf{ap}$
    \ENSURE $b \in \braces{0, 1, \bot}$
    \STATE Generate random prompt $\vec{x}$
    \STATE \textbf{send} $\angles{\mathsf{Request}, \vec{x}}$ to $\mathsf{Srv}$
    \WHILE{receiving $\angles{\mathsf{Response}, \vec{y}, T, \psi}$}
    \STATE \textbf{send} $\angles{\mathsf{Audit}}$ to $\mathsf{Srv}$.
    \ENDWHILE

    \WHILE{receiving $\angles{\mathsf{Proof}, \phi, \pi}$}
    \IF{$\mathsf{VC}.\mathsf{prove}\parens{(\psi_M, \psi, \vec{x}, \vec{y}, T), \pi}$}
    \STATE Return $\bot$.
    \ENDIF
    \STATE Return 0/1 based on $\phi$ and $\mathsf{ap}$.
    \ENDWHILE
\end{algorithmic}
\end{algorithm}

\begin{algorithm}
\caption{$\mathsf{VC}$: Verifiable computation function} \label{alg:vc}
\begin{algorithmic}[1]
    \REQUIRE Public input: model commitment $\psi_M$, logits commitment $\psi$, prompt $\vec{x}$, response $\vec{y}$, token usage $T$;
    Private input: model $\mathcal{M}_{\theta^\star} = (E_{\theta^\star}, F_{\theta^\star}, S, G_{\theta^\star}, D)$, seed $r$, logits $\braces{\ell_1, \cdots, \ell_N}$.
    \ENSURE Logits distance distribution
    \IF{$\psi \neq \mathsf{Hash}\parens{r, \braces{\ell_1, \cdots, \ell_N}}$ \textbf{or} $\mathsf{\psi}_M \neq \mathsf{Hash}(\mathcal{M}_{\theta^\star})$}
    \STATE \textbf{abort}
    \ENDIF
    \STATE $h^{\star}_0 \leftarrow E_{\theta^\star}(\vec{x})$.
    \FOR{$i = 1, 2, \cdots, N$} 
    \STATE $\tilde{h}^{\star}_i, \ell^{\star}_i \leftarrow F_{\theta^\star}(h^{\star}_{i-1})$
    \STATE $\delta_i = \mathsf{Dis}(\ell_i, \ell^{\star}_i)$
    \STATE $d_i := S(\ell_i, r)$
    \STATE $h^{\star}_i \leftarrow G_{\theta^\star}(\tilde{h}^{\star}_i, d_i)$
    \ENDFOR
    \IF{$\vec{y}, T \neq D(\braces{d_1, \cdots, d_N})$}
    \STATE \textbf{abort}    
    \ENDIF
    \STATE \textbf{return} distribution of $\braces{\delta_1, \cdots, \delta_N}$
\end{algorithmic}
\end{algorithm}



\section{Proof of Propositions} \label{app:proof}

\subsection{Proposition \ref{prop:logit_bias_variance} and Proposition \ref{prop:logit_bias_variance_numerical}} \label{app:logit-bias}

\begin{proof}
Fix a prompt $\vec{x}$ and a sequence of discrete decisions $\vec{d} = (d_1,\dots,d_N)$, thereby aligning the control flow of two executions. Let $\ell_i^{\star}(\theta)$ denote the deterministic logits produced by the full-precision model at step $i$.

For an approximate execution $\mathcal{M}^{(A)}$, the observed logits can be written as
\[
    \ell_i^{(A)} = \ell_i^{\star}(\theta^{(A)}) + \eta_i^{(A)},
\]
where $\eta_i^{(A)}$ captures run-dependent numerical error due to finite precision or kernel non-determinism. Comparing $\mathcal{M}^{(A)}$ with the ideal model $\mathcal{M}^{(B)}=\mathcal{M}^{\star}$ yields
\[
    \ell_i^{(A)} - \ell_i^{(B)} 
    = 
    \underbrace{\big(\ell_i^{\star}(\theta^{(A)}) - \ell_i^{\star}(\theta^{(B)})\big)}_{\Delta_i^{\mathrm{model}}}
    +
    \underbrace{\eta_i^{(A)}}_{\Delta_i^{\mathrm{prec}}},
\]
where $\eta_i^{(B)} = 0$ by definition.

\textit{Model substitution.}  
If $\theta^{(A)} \neq \theta^{(B)}$, then $\Delta_i^{\mathrm{model}} \neq 0$, producing a systematic (biased) deviation in logits.

\textit{Precision reduction.}  
If $\theta^{(A)} = \theta^{(B)}$, then $\Delta_i^{\mathrm{model}} = 0$ and discrepancies arise solely from numerical noise. Lower-precision formats induce higher variance:
\[
    \mathrm{Var}(\eta_i^{\mathrm{FP8}}) > \mathrm{Var}(\eta_i^{\mathrm{FP16}}),
\]
resulting in stochastic, zero-mean deviations rather than persistent bias.
\end{proof}

\subsection{Proposition \ref{prop:token_report}} \label{app:overbill}

\begin{proof}
Let the claimed model be
\[
\mathcal{M} := (E_{\theta}, F_{\theta}, S, G_{\theta}, D),
\]
and consider a fixed prompt $\vec{x}$ with discrete decision sequence
\(
\vec{d} = (d_1,\dots,d_N)
\)
generated by $\mathcal{M}$. The corresponding output is
\[
(\vec{y}, T) = D(\vec{d}), \qquad T = N.
\]

We construct an alternative model
\[
\mathcal{M}' := (E_{\theta}, F'_{\theta}, S, G'_{\theta}, D'),
\]
which differs from $\mathcal{M}$ only in the recurrent hybrid step and output reconstruction. Define $K > 0$ and extend the discrete decision sequence as
\[
\vec{d}' := (d_1,\dots,d_N,\underbrace{\bar{d},\dots,\bar{d}}_{K}),
\]
where $\bar{d} \in \mathcal{D}$ is a fixed dummy decision.

Define the modified state update such that
\[
G'_{\theta}(\tilde{h}, \bar{d}) = h \quad \text{and} \quad
F'_{\theta}(h) = (h, \ell),
\]
i.e., the dummy steps leave the hidden state invariant and do not affect subsequent decisions. Finally, define the reconstruction function
\[
D'(\vec{d}') := (\vec{y}, N+K),
\]
where $\vec{y}$ is recovered solely from the prefix $(d_1,\dots,d_N)$.

By construction,
\[
\mathcal{M}'(\vec{x}) = (\vec{y}, T+K), 
\qquad
\mathcal{M}(\vec{x}) = (\vec{y}, T),
\]
while the observable output sequence $\vec{y}$ is identical. Hence, token overreporting corresponds to replacing $\mathcal{M}$ with a functionally distinct model $\mathcal{M}'$ that modifies the hybrid transition structure.

Since $\mathcal{M}' \neq \mathcal{M}$ but preserves output semantics while altering reported resource usage, token overreporting is equivalent to a form of model substitution, and therefore constitutes a special case of model downsizing.
\end{proof}

\section{Optimization: Verification of Top-$K$.} \label{app:topk-dis}

Among all discrete operations in LLM inference, top-$K$ selection is the most dominant, appearing in both token sampling and mixture-of-experts routing. Formally, top-$K$ is a deterministic operator
\[
\mathsf{Top}_K : \mathbb{R}^n \rightarrow [n]^K,
\]
which returns the indices of the $K$ largest entries in a logits vector. In practice, $n$ is typically much larger than $K$. For example, in token sampling, $n$ corresponds to the size of the entire vocabulary, often exceeding $10^5$, while $K$ is typically around 20.

Due to the large discrepancy between $n$ and $K$, storing and committing to the full logits vector can be prohibitively expensive.
Instead, \name allows the LLM to store only the runtime top-$K$ indices rather than the entire logits vector, by introducing a minimal-perturbation distance criterion.

\paragraph{Top-$K$ distance.}
Given logits $\ell \in \mathbb{R}^n$ and a claimed index set $I \subseteq [n]$ with $|I|=K$, we define
\[
\Delta_{\mathrm{TopK}}(\ell, I)
:=
\min_{\ell' : \mathsf{Top}_K(\ell')=I} \|\ell' - \ell\|_1 .
\]

By construction, $\Delta_{\mathrm{TopK}}(\ell, I) = 0$ if and only if $I$ is a valid top-$K$ set. Otherwise, it quantifies the minimal numerical perturbation required to make $I$ the exact top-$K$. The defined distance is $1$-Lipschitz with respect to $\ell$, ensuring robustness to small numerical noise.

\paragraph{Efficient computation}

For each given logit vector $\ell$ and index set $I$, $\Delta_{\mathrm{TopK}}(\ell, I)$ can be evaluated efficiently. Since $I$ denotes the indices of the largest $K$ entries in the logits vector, there must exist a scalar threshold $t \in \mathbb{R}$ such that
\[
\ell'_i \ge t \quad \forall i \in I,  \qquad \ell'_j \le t \quad \forall j \notin I.
\]
Using a greedy algorithm, for a fixed threshold $t$, the optimal perturbed logits $\ell'$ satisfy
\[
\ell'_i = \max(\ell_i, t) \quad \forall i \in I,
\quad
\ell'_j = \min(\ell_j, t) \quad \forall j \notin I.
\]
Intuitively, entries in $I$ are minimally raised to $t$, while entries outside $I$ are minimally lowered. 

Assume the logits are sorted as $\ell_1 \ge \ell_2 \ge \cdots \ge \ell_n$, and let $t$ lie in the interval $\ell_u \ge t \ge \ell_{u+1}$. The resulting distance is given by
\[
\sum_{i \le u,\, i \notin I} (\ell_i - t) + \sum_{i \ge u + 1,\, i \in I} (t - \ell_i).
\]
When $t \in [\ell_u, \ell_{u+1}]$, this expression is a linear function of $t$, and thus monotonic. The distance reaches its minimum when $t$ is either $\ell_u$ or $\ell_{u+1}$. Therefore, we only need to consider $n$ candidate thresholds, where $t$ equals $\ell_1, \ell_2, \ldots, \ell_n$.
Utilizing this observation, the top-$K$ distance can be computed in linear time.

\paragraph{Integration with \name.}

During inference, the server stores only the top-$K$ indices.
During proving, the VC recomputes logits $\ell$ using the full-precision model and verifies the recorded top-$K$ indices $I$ by evaluating $\Delta_{\mathrm{TopK}}(\ell, I)$.
The resulting distance distribution is then used for statistical verification.
\section{Ceremony: choice of $t_1$, $t_2$} \label{app:ceremony}

To derive the desirable hyperparameters $t_1$ and $t_2$, an initial setup is required: the model provider runs both their deployed LLM and the quantized version they intend to prevent on a prescribed, sufficiently large dataset within a TEE to obtain the empirical per-request LDD as a reference.  
To search for the optimal hyperparameters, the auditor enumerates each $t_1$ and identifies the largest corresponding $t_2$ such that the detection rate exceeds 5\%.  
The $(t_1, t_2)$ pair yielding the lowest false positive rate is selected as the final parameter setting.

\section{Adaptive Adversary Analysis} \label{app:adaptive}

The preceding discussion assumes that the adversary commits to logits generated by a simplified or quantized execution. A more subtle concern is whether a rational adversary could strategically commit to \emph{fabricated} logits to increase the probability of evasion. 
Here, we prove that as long as the LLM server is rational (though not necessarily benign), no such strategy can offer additional benefit.

\paragraph{Rationality assumption.}
We assume the adversary is rational: under a fixed computational budget $T$, it always adopts the highest-quality approximation of the claimed model achievable within that budget.

\begin{proposition}[Logit Commitment Optimality]
\label{prop:unforgeability}
Let $\mathcal{M}_{\mathrm{opt}}$ be the best approximation of the claimed model $\mathcal{M}^\star$ achievable within budget $T$.
Then the adversary’s dominant strategy is to truthfully commit to the logits produced by $\mathcal{M}_{\mathrm{opt}}$.

\end{proposition}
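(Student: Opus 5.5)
The plan is a dominance argument in the style of a best-response analysis. First I fix the adversary's strategy space. Under budget $T$, the server runs some execution $\mathcal{M}_{\mathrm{dep}}$ to produce the discrete decisions $\{d_i\}$, and hence the response $(\vec{y},T)$. It then commits to a logit sequence $\{\hat\ell_i\}$, which need not equal the logits $\{\ell_i\}$ it actually computed.

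The first step is a \emph{consistency constraint}. Algorithm~\ref{alg:vc} recomputes $d_i := S(\hat\ell_i, r)$ from the committed logits and aborts unless $D(\{d_i\}) = (\vec{y},T)$. By the binding property of the commitment $\psi$ and soundness of VC, any committed $\{\hat\ell_i\}$ that passes must therefore reproduce exactly the decisions underlying the delivered response. So the adversary's only freedom is to choose $\{\hat\ell_i\}$ within the preimage set
\[
\mathcal{L}(\vec d) = \{\{\hat\ell_i\} : S(\hat\ell_i,r)=d_i\ \forall i\}.
\]
Fabrication cannot change the response or the billed count. It can only change the audit statistic $\delta_i=\mathsf{Dis}(\hat\ell_i,\ell_i^\star)$.

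The second step compares utilities. The adversary's payoff is revenue minus cost minus a penalty proportional to the detection probability, and detection is monotone in the tail mass $p(t_1)$. The adversary would like to pick, for each $i$, the $\hat\ell_i\in\mathcal{L}(\vec d)$ minimizing $\mathsf{Dis}(\hat\ell_i,\ell_i^\star)$. Computing that minimizer requires knowing $\ell_i^\star$, which is the full-precision logit along the aligned control flow.

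I would argue as follows. If the adversary could evaluate $\ell_i^\star$ (or a better estimate of it than its own $\ell_i$) within budget $T$, then by the rationality assumption it would already have used that computation as its deployed execution. That would make $\mathcal{M}_{\mathrm{opt}}$'s logits the best estimate available. Hence, conditional on the information computable within budget, the adversary's best estimate of $\ell_i^\star$ is the $\mathcal{M}_{\mathrm{opt}}$ logit $\ell_i$ itself. Committing $\hat\ell_i=\ell_i$ lies in $\mathcal{L}(\vec d)$ trivially and minimizes expected distance among budget-feasible choices. For the top-$K$ variant, the same holds with $\Delta_{\mathrm{TopK}}$: the committed object is the index set $I$, which is forced by $\vec y$ and the routing.

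The main obstacle is making the third step rigorous. ``Best approximation'' must be formalized so that it implies minimal expected $\mathsf{Dis}$ to $\ell^\star$, and I must rule out that a cheap post-hoc transformation of $\ell_i$ (e.g.\ smoothing toward a prior) lowers expected distance without extra cost. I would handle this by defining $\mathcal{M}_{\mathrm{opt}}$ as the budget-$T$ procedure minimizing $\mathbb{E}[\mathsf{Dis}(\cdot,\ell^\star)]$ under the aligned control flow. Any cheap post-processing of $\ell_i$ is itself a budget-$T$ procedure, so it is dominated by definition. With this definition, truthful commitment weakly dominates every fabrication, which proves Proposition~\ref{prop:unforgeability}.
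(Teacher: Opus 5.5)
Your decisive step is exactly the paper's contradiction argument: any fabrication pipeline that fits in budget $T$ is itself a member of $\mathcal{M}_T$, so by definition it cannot beat $\mathcal{M}_{\mathrm{opt}}$. The paper, however, defines $\mathcal{M}_{\mathrm{opt}}$ as an $\arg\min$ of one deviation measure $D$ and treats that same $D$ as the adversary's objective, so the tautology closes. Your formalization breaks this alignment.

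In your second step, the payoff depends on the flag probability, which is governed by the tail mass $p(t_1)$ relative to $t_2$. In your last step, you instead define $\mathcal{M}_{\mathrm{opt}}$ as the minimizer of $\mathbb{E}[\mathsf{Dis}(\cdot,\ell^\star)]$. A smaller mean distance does not imply a smaller tail mass. A trace with $\mathsf{Dis}\equiv 0.09$ is never flagged at $t_1=0.1$, while a trace with mean $0.005$ built from rare spikes to $0.5$ can be. So ``dominated in expected distance'' does not yield ``dominated in payoff'', and the conclusion that truthful commitment weakly dominates every fabrication does not follow. This is precisely the smoothing-toward-a-prior threat you raised yourself: such a transformation trades mean error for tail mass, and your chosen definition does not rule it out.

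The repair is to define $\mathcal{M}_{\mathrm{opt}}$ by the quantity the adversary actually minimizes, e.g.\ the per-request flag probability $\Pr[p(t_1)>t_2]$. Alternatively, as the paper does, use a single $D$ both for ``best approximation'' and as the objective. Either way, your argument then reduces to the paper's one-line contradiction.

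A lesser issue is that your first step overstates the consistency constraint. The check $D(\{d_i\})=(\vec y,T)$ only pins down decisions that surface in $(\vec y,T)$. Committed expert-routing logits are therefore not forced to reproduce the routing actually served. Likewise, a committed top-$K$ token set is not determined by the single sampled token visible in $\vec y$. So the adversary's freedom is not confined to $\mathcal{L}(\vec d)$ for the served $\vec d$, and $I$ is not ``forced by $\vec y$''. This does not sink the argument, because the definitional step quantifies over every budget-$T$ pipeline anyway, but those claims should be dropped or weakened.
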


\begin{proof}
Let $\mathcal{M}_T$ denote the set of all inference procedures executable within budget $T$,
and let $L_{\mathcal{M}}(\vec{x})$ be the logits produced by model $\mathcal{M}$ on input $\vec{x}$.
Let $D(\cdot,\cdot)$ be a distance metric measuring deviation from the claimed logits
$L_{\mathcal{M}^{\star}}(\vec{x})$.

By definition,
\[
\mathcal{M}_{\mathrm{opt}}
= \arg\min_{\mathcal{M}\in \mathcal{M}_T} D\!\left(L_{\mathcal{M}}(\vec{x}),\, L_{\mathcal{M}^{\star}}(\vec{x})\right).
\]
Suppose the adversary generates a committed logit vector $L_{\mathrm{fake}}$ such that
\[
D\!\left(L_{\mathrm{fake}},\, L_{\mathcal{M}^{\star}}(\vec{x})\right)
<
D\!\left(L_{\mathcal{M}_{\mathrm{opt}}}(\vec{x}),\, L_{\mathcal{M}^{\star}}(\vec{x})\right).
\]
Producing $L_{\mathrm{fake}}$ requires some computational process within budget $T$,
implying the existence of a model in $\mathcal{M}_T$ that approximates $\mathcal{M}^{\star}$
better than $\mathcal{M}_{\mathrm{opt}}$.
This contradicts the optimality of $\mathcal{M}_{\mathrm{opt}}$.
\end{proof}

\section{Other Global LDDs} \label{app:global-ldd}

We present visualization of global logit TV distance in Figure \ref{fig:global_tv}, logit KL divergence in Figure \ref{fig:global_kl}, token sampling top-$K$ distance in Figure \ref{fig:global_topk_diff} and expert top-$K$ distance in Figure \ref{fig:global_expert_diff}.

\begin{figure}[t]
    \centering
    \begin{subfigure}[t]{0.49\linewidth}
        \centering
        \includegraphics[width=\linewidth]{figures/llama_tvp.pdf}
        \caption{LLaMA3-70B}
    \end{subfigure}
    \hfill
    \begin{subfigure}[t]{0.49\linewidth}
        \centering
        \includegraphics[width=\linewidth]{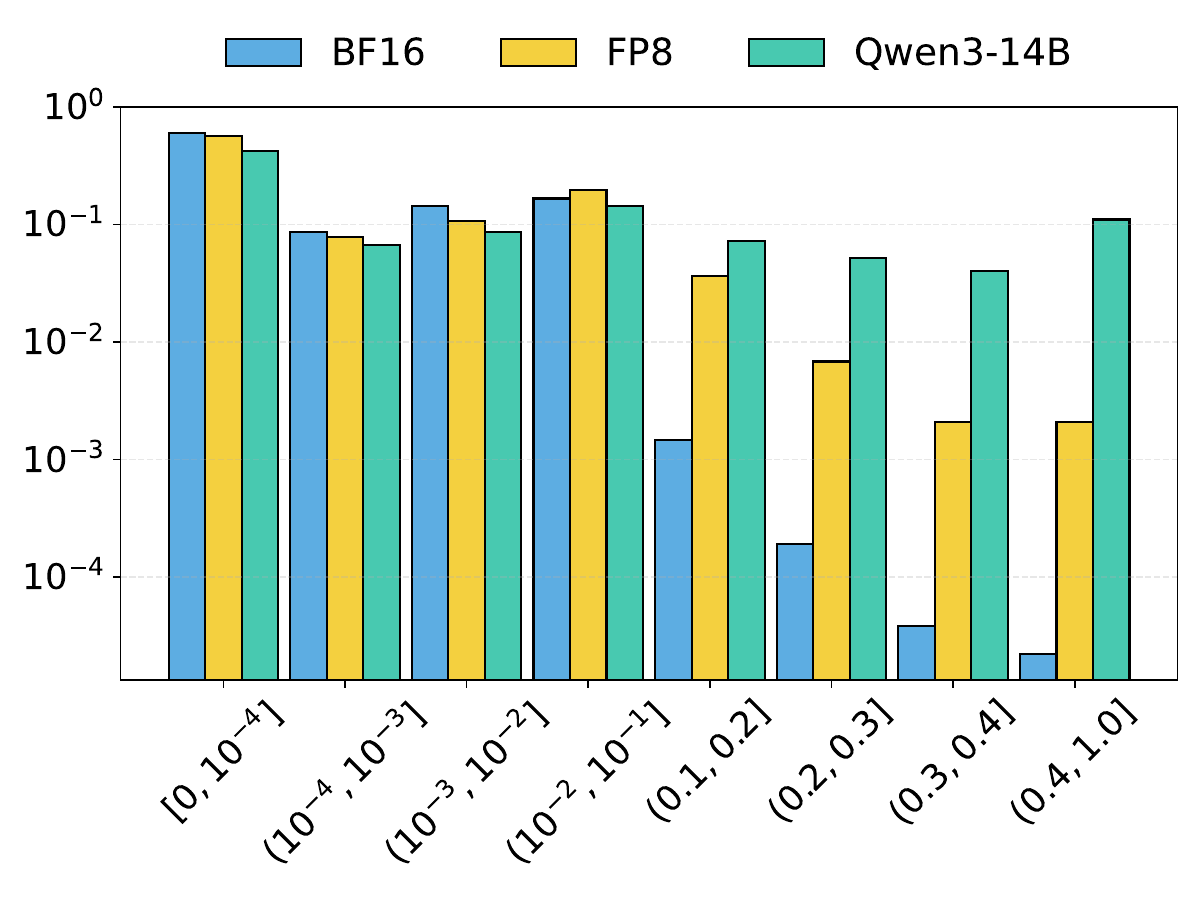}
        \caption{Qwen3-32B}
    \end{subfigure}

    \begin{subfigure}[t]{0.49\linewidth}
        \centering
        \includegraphics[width=\linewidth]{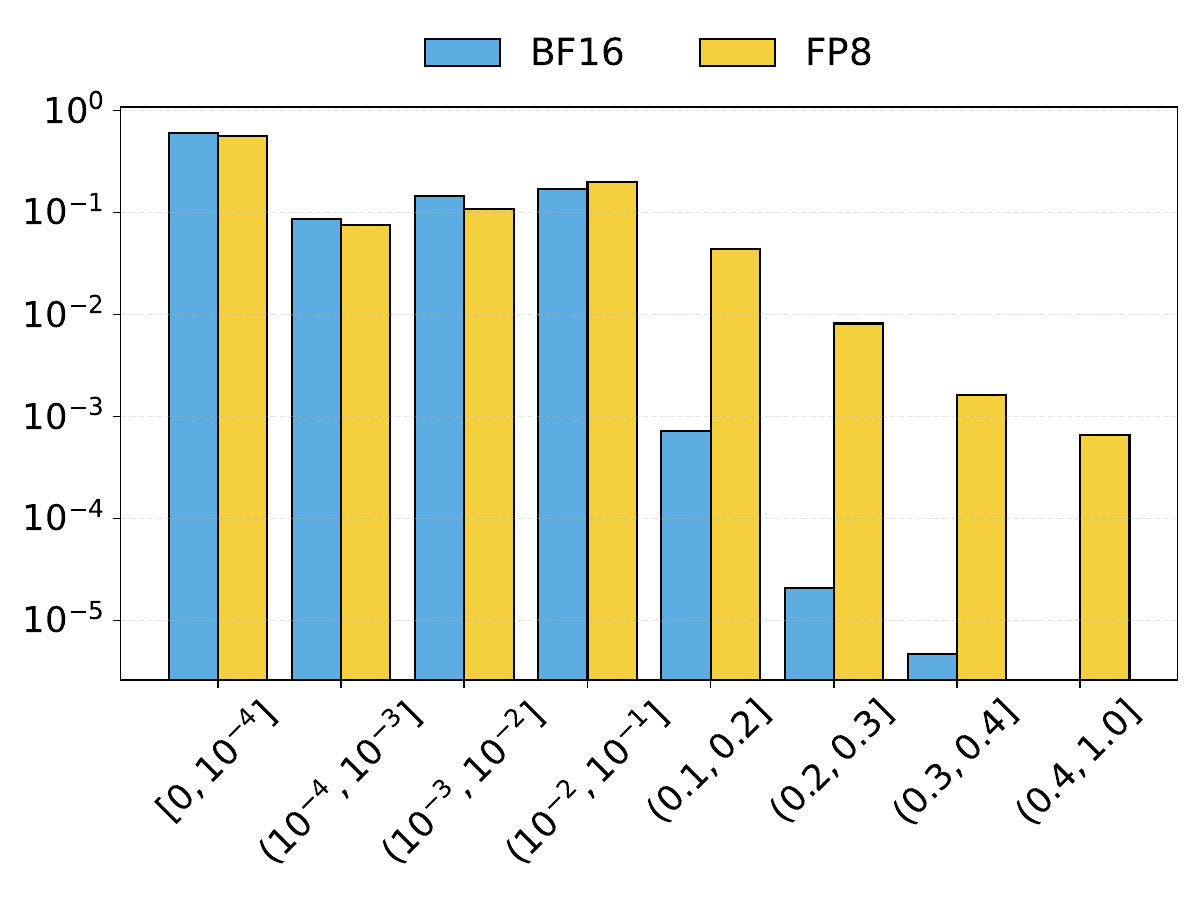}
        \caption{Qwen3-30B-A3B}
    \end{subfigure}
    \hfill
    \begin{subfigure}[t]{0.49\linewidth}
        \centering
        \includegraphics[width=\linewidth]{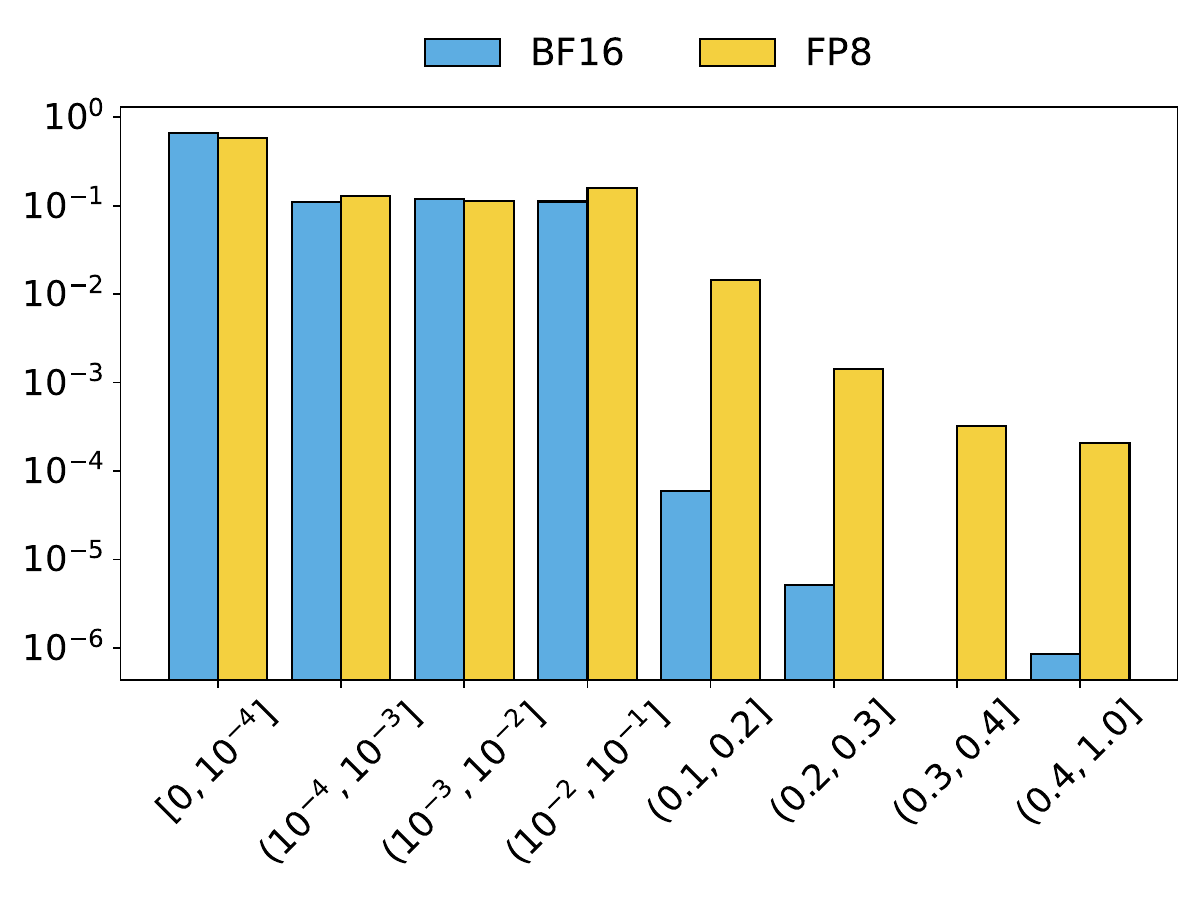}
        \caption{DeepSeek-V2-Lite}
    \end{subfigure}

    \caption{Global logit TV-distance distribution. Probabilities are displayed on a logarithmic y-axis to better capture the tail behavior. }
    \label{fig:global_tv}
\end{figure}

\begin{figure}[ht]
    \centering
    \begin{subfigure}[t]{0.49\linewidth}
        \centering
        \includegraphics[width=\linewidth]{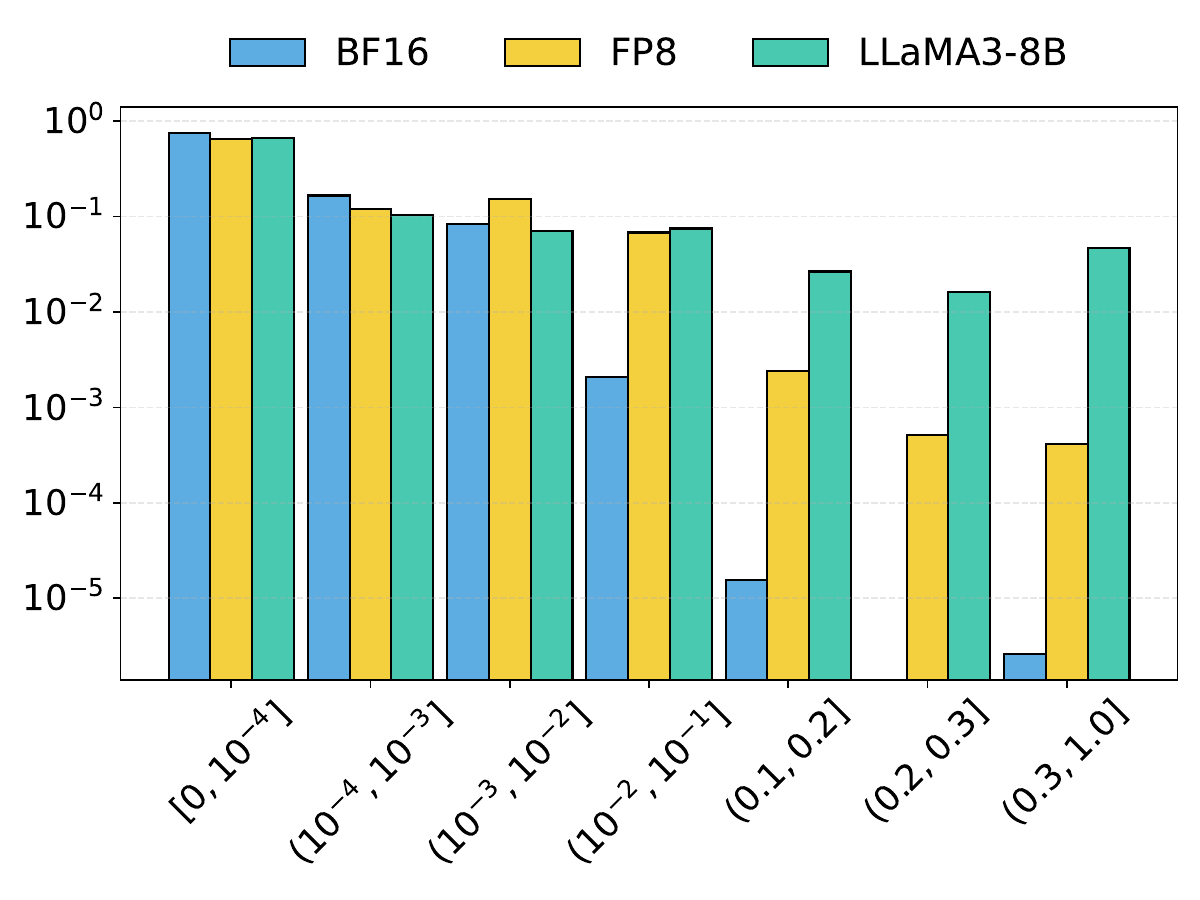}
        \caption{LLaMA3-70B}
    \end{subfigure}
    \hfill
    \begin{subfigure}[t]{0.49\linewidth}
        \centering
        \includegraphics[width=\linewidth]{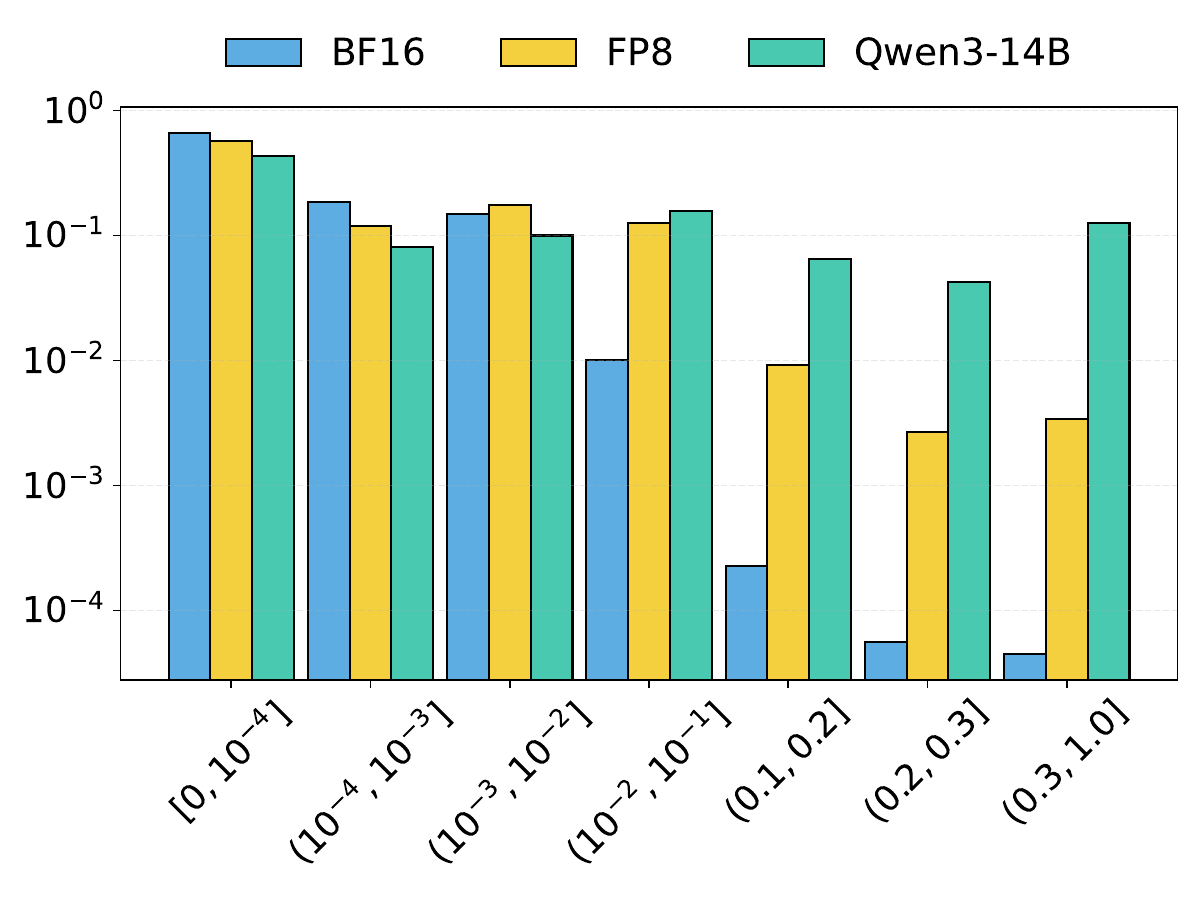}
        \caption{Qwen3-32B}
    \end{subfigure}

    \begin{subfigure}[t]{0.49\linewidth}
        \centering
        \includegraphics[width=\linewidth]{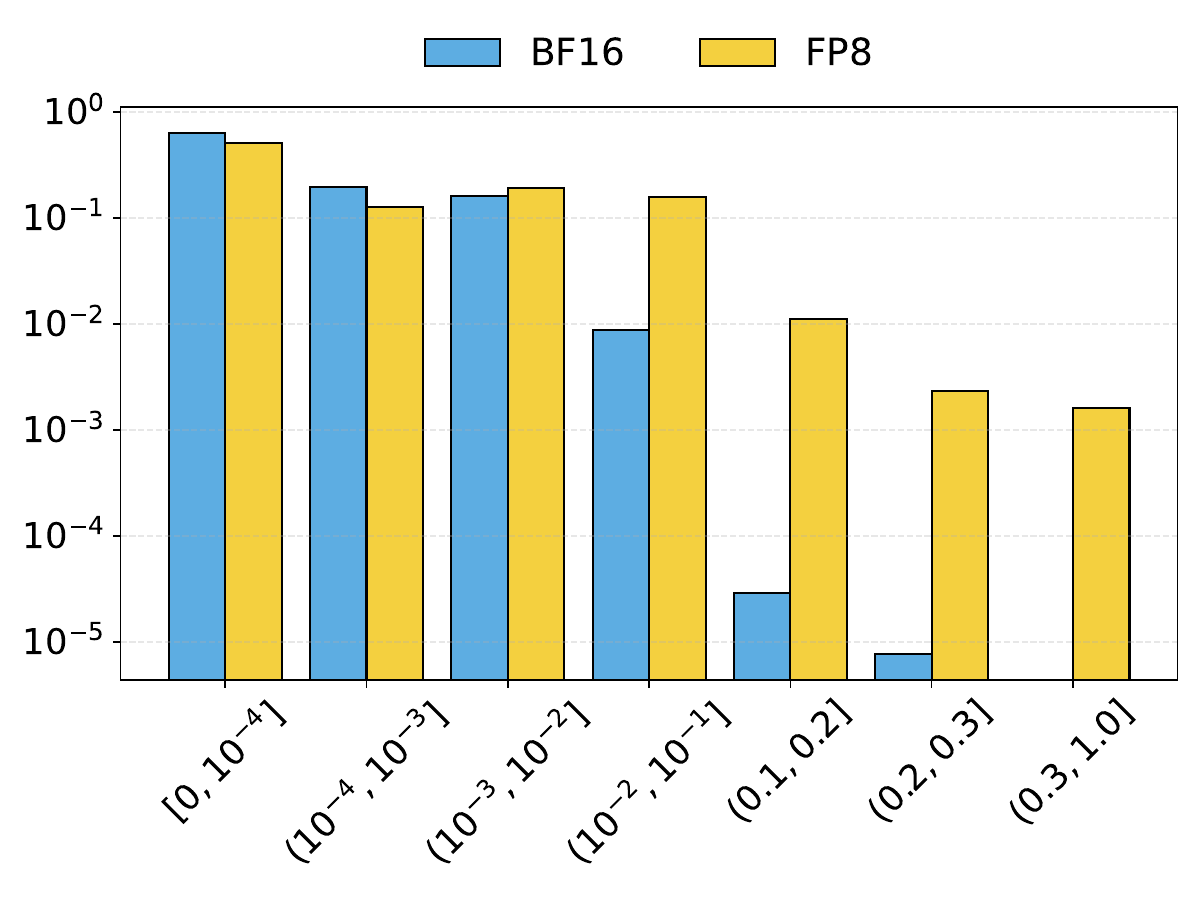}
        \caption{Qwen3-30B-A3B}
    \end{subfigure}
    \hfill
    \begin{subfigure}[t]{0.49\linewidth}
        \centering
        \includegraphics[width=\linewidth]{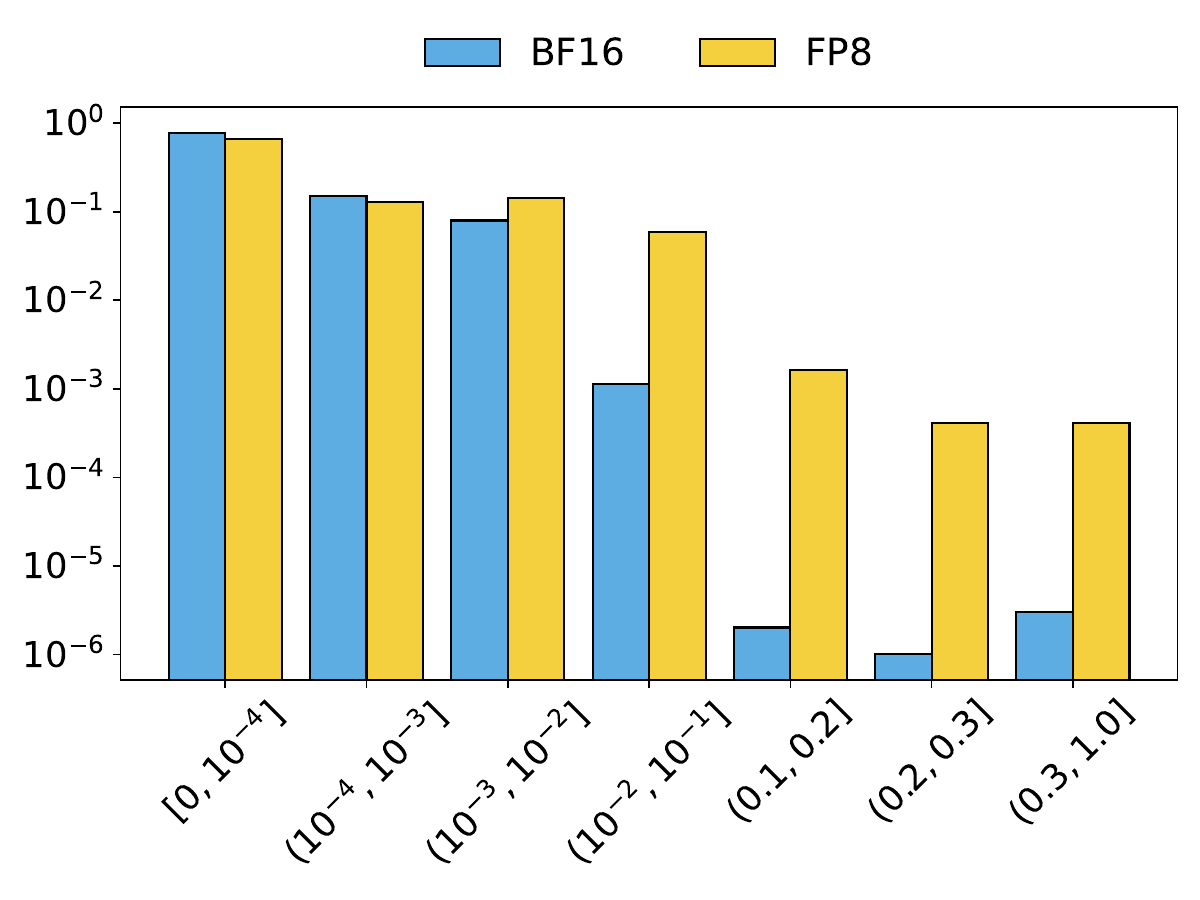}
        \caption{DeepSeek-V2-Lite}
    \end{subfigure}

    \caption{Global logit KL divergence distribution. Probabilities are displayed on a logarithmic y-axis to better capture the tail behavior. }
    \label{fig:global_kl}
\end{figure}

\begin{figure}[ht]
    \centering
    \begin{subfigure}[t]{0.49\linewidth}
        \centering
        \includegraphics[width=\linewidth]{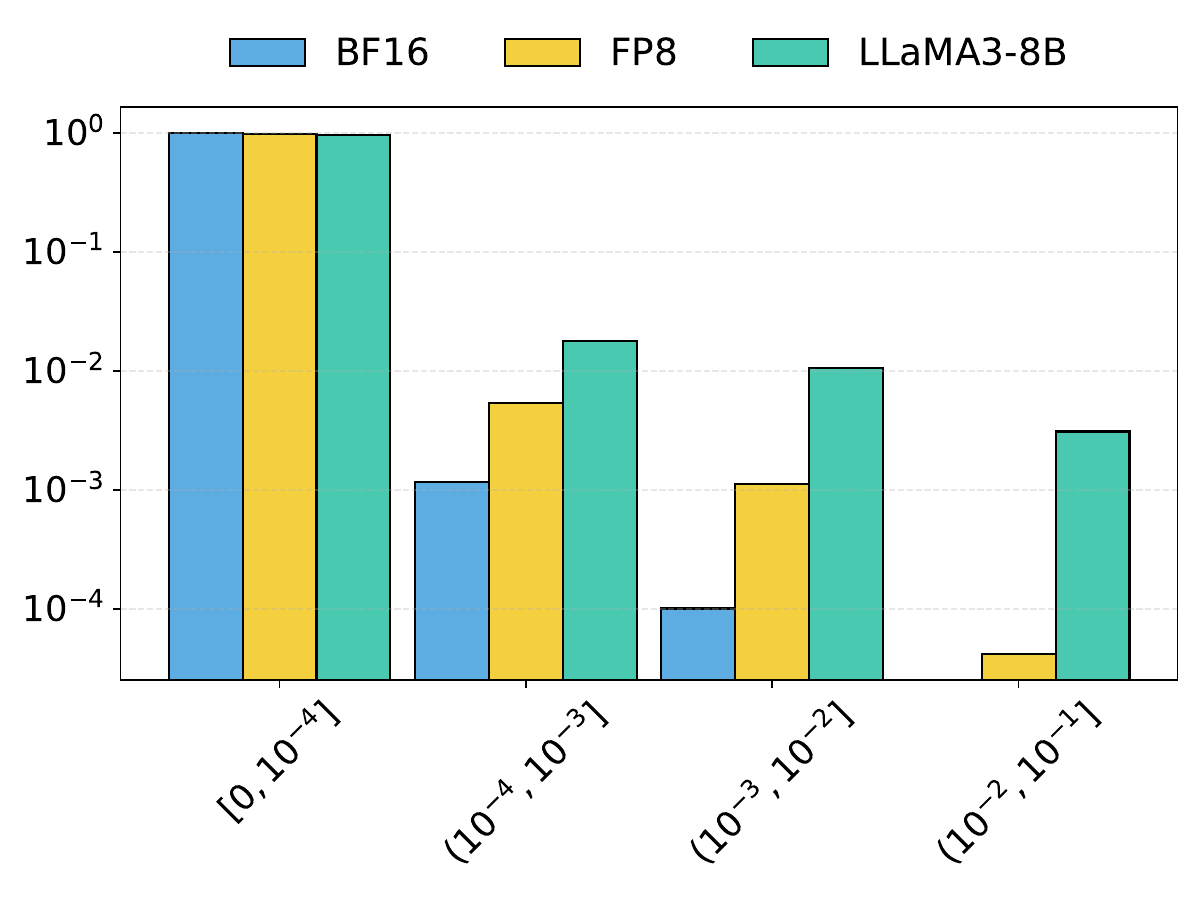}
        \caption{LLaMA3-70B}
    \end{subfigure}
    \hfill
    \begin{subfigure}[t]{0.49\linewidth}
        \centering
        \includegraphics[width=\linewidth]{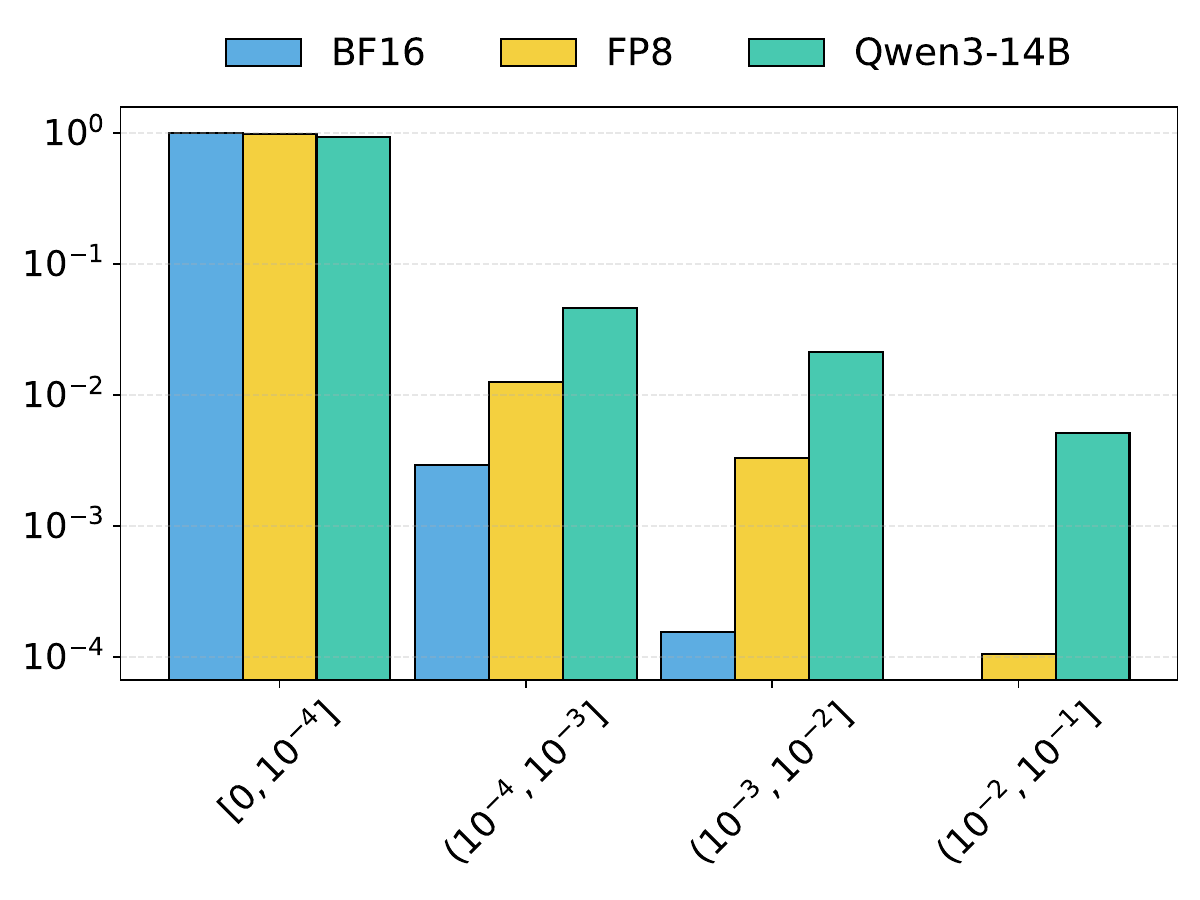}
        \caption{Qwen3-32B}
    \end{subfigure}

    \begin{subfigure}[t]{0.49\linewidth}
        \centering
        \includegraphics[width=\linewidth]{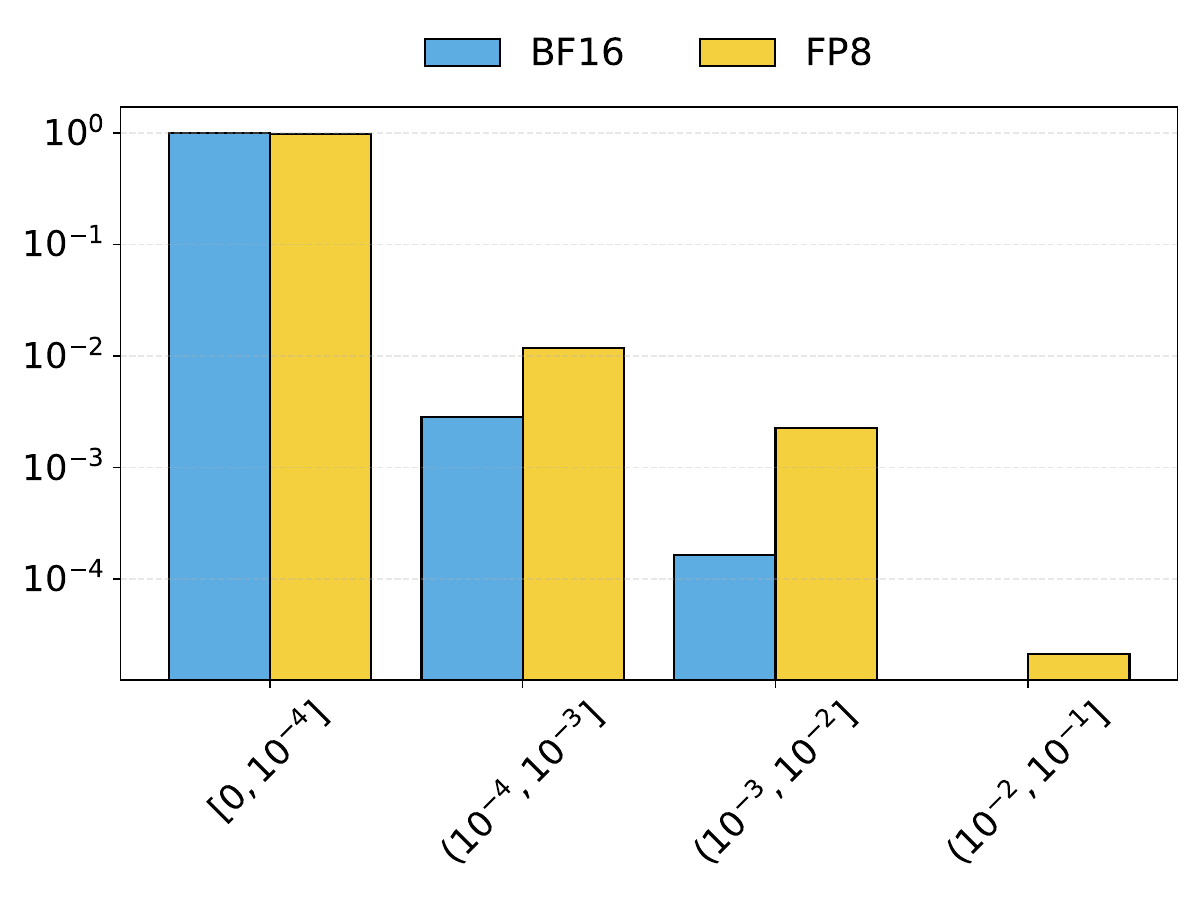}
        \caption{Qwen3-30B-A3B}
    \end{subfigure}
    \hfill
    \begin{subfigure}[t]{0.49\linewidth}
        \centering
        \includegraphics[width=\linewidth]{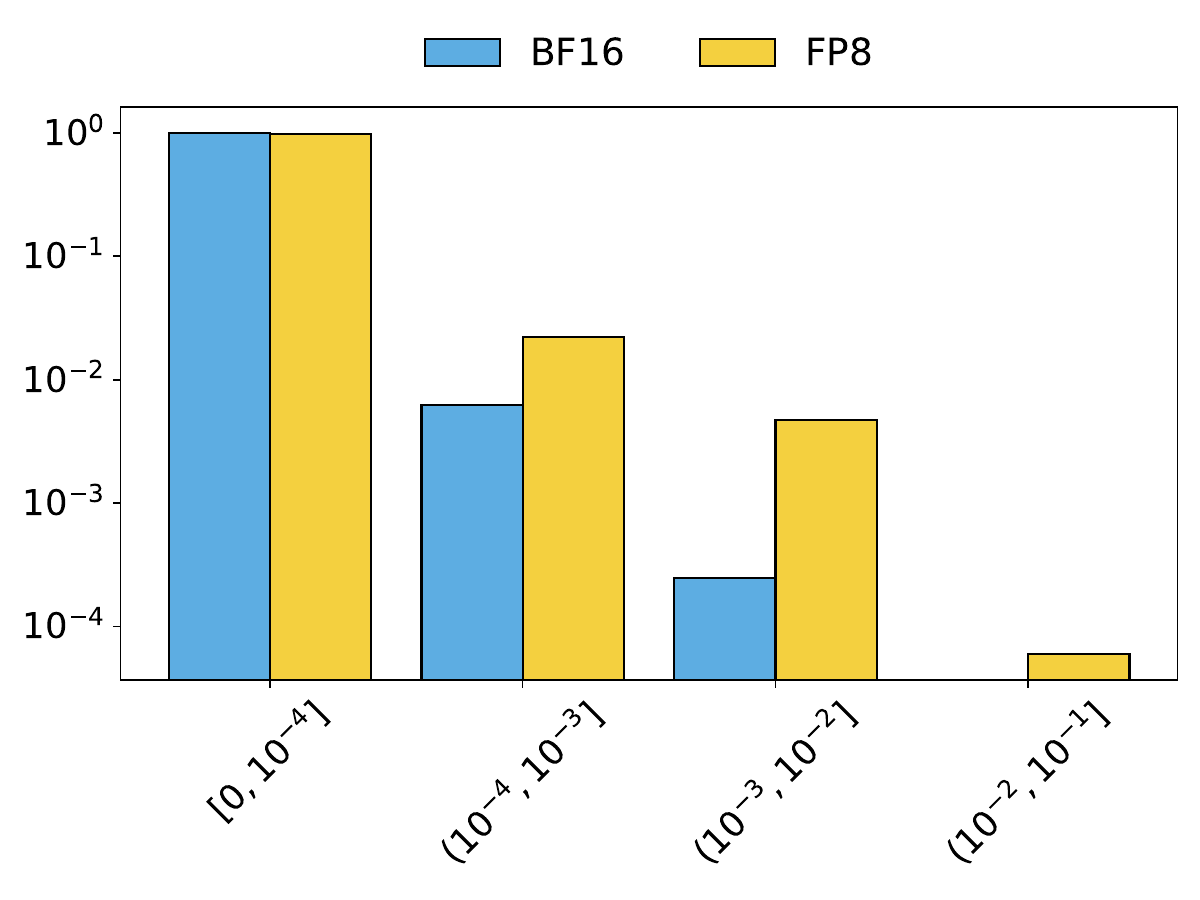}
        \caption{DeepSeek-V2-Lite}
    \end{subfigure}

    \caption{Global logit Top-$K$ distance distribution. Probabilities are displayed on a logarithmic y-axis to better capture the tail behavior. }
    \label{fig:global_topk_diff}
\end{figure}

\begin{figure}[ht]
    \centering

    \begin{subfigure}[t]{0.49\linewidth}
        \centering
        \includegraphics[width=\linewidth]{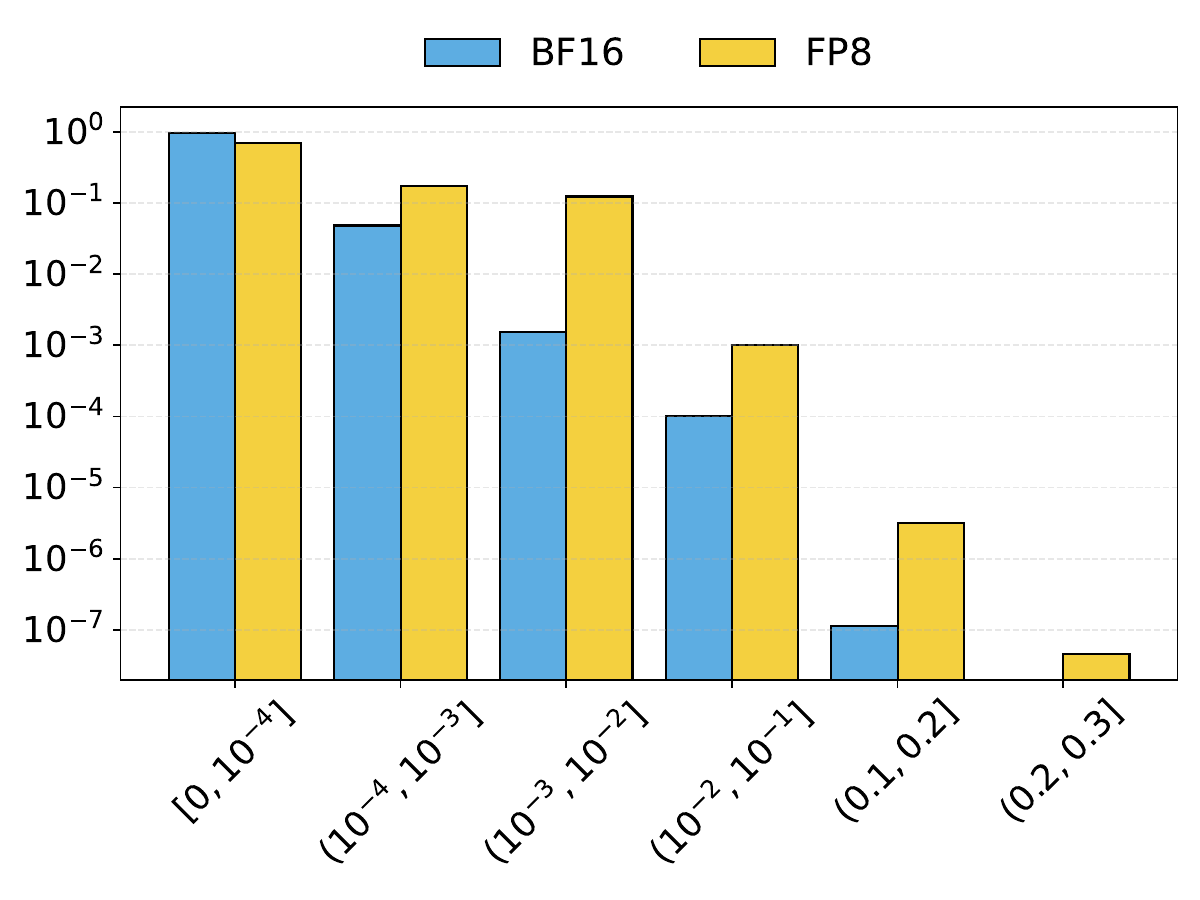}
        \caption{Qwen3-30B-A3B}
    \end{subfigure}
    \hfill
    \begin{subfigure}[t]{0.49\linewidth}
        \centering
        \includegraphics[width=\linewidth]{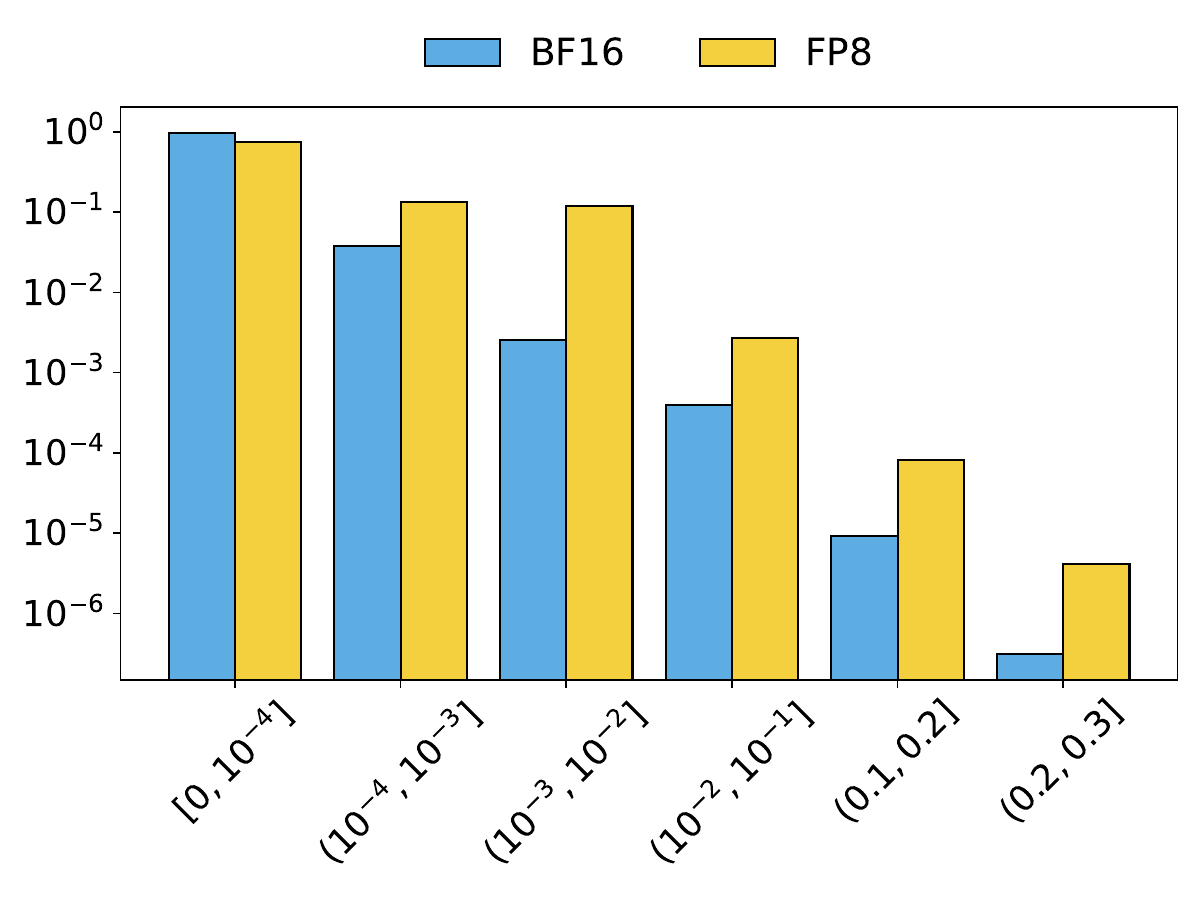}
        \caption{DeepSeek-V2-Lite}
    \end{subfigure}

    \caption{Global logit Top-$K$ distance distribution. Probabilities are displayed on a logarithmic y-axis to better capture the tail behavior. }
    \label{fig:global_expert_diff}
\end{figure}


\end{document}